\documentclass[11pt]{article}
\usepackage{amsmath,amssymb,amsthm}
\usepackage{geometry}
\usepackage{authblk}
\usepackage{hyperref}
\usepackage{tikz}
\usepackage{pgfplots}
\usepackage[utf8]{inputenc}
\usepackage[T1]{fontenc}
\pgfplotsset{compat=1.18}
\usetikzlibrary{arrows.meta}
\usepackage{orcidlink}

\title{Hilbert's Sixth Problem and Soft Logic}
\author[1,2]{Moshe Klein}
\author[3]{Oren Fivel \orcidlink{0000-0002-2629-9907}}
\affil[1]{The University of Kiryat Shmona in the Galilee, Upper Galilee, 1220800, Israel (e-mail: kleinmos@telhai.ac.il)}
\affil[2]{The Hebrew University of Jerusalem, Mount Scopus, 9190500 Jerusalem, Israel
(e-mail: moshe.klein@mail.huji.ac.il)}
\affil[3]{Ben-Gurion University of the Negev, Beer Sheva 8410501, Israel (e-mail: fivel@post.bgu.ac.il}
\date{May 17, 2026}
\theoremstyle{plain}
\newcounter{theorem}
 \newtheorem{Theorem}[theorem]{Theorem}
 
 \newcounter{lemma}
 \newtheorem{Lemma}[lemma]{Lemma}
 
 \newcounter{corollary}
 \newtheorem{Corollary}[corollary]{Corollary}
\newcounter{axiom}
 \newtheorem{Axiom}[axiom]{Axiom}
\newcounter{definition}
 \newtheorem{Definition}[definition]{Definition}
 
 \newcounter{proposition}
 \newtheorem{Proposition}[proposition]{Proposition}
 
 \newcounter{characterization}
 \newcounter{property}
 \newcounter{problem}
 \newcounter{example}
 \newtheorem{Example}[example]{Example}
 
 \newcounter{examplesanddefinitions}
 \newcounter{remark}
 \newtheorem{Remark}[remark]{Remark}
 
 \newcounter{hypothesis}
 \newcounter{notation}
 \newcounter{assumption}
\providecommand{\keywords}[1]
{
  \small
  \textbf{\textit{Keywords---}} #1
}
\newcommand{\fb}{\perp}
\newcommand{\softplus}{\dot{+}}
\usepackage{xcolor}

\usepackage{graphicx} 
\usepackage{subfig}   

\usepackage{listings}
\definecolor{codegreen}{rgb}{0,0.6,0}
\definecolor{codegray}{rgb}{0.5,0.5,0.5}
\definecolor{codepurple}{rgb}{0.58,0,0.82}
\definecolor{backcolour}{rgb}{0.95,0.95,0.92}

\lstdefinestyle{mystyle}{
    backgroundcolor=\color{backcolour},   
    commentstyle=\color{codegreen},
    keywordstyle=\color{magenta},
    numberstyle=\tiny\color{codegray},
    stringstyle=\color{codepurple},
    basicstyle=\ttfamily\tiny,
    breakatwhitespace=false,         
    breaklines=true,                 
    captionpos=b,                    
    keepspaces=true,
    numbersep=5pt,                  
    showspaces=false,                
    showstringspaces=false,
    showtabs=false,                  
    tabsize=2
}

\begin{document}
\maketitle
\begin{abstract}
The Hilbert's sixth problem calls for the axiomatization of physics, particularly the derivation of macroscopic statistical laws from microscopic mechanical principles. A conceptual difficulty arises in classical probability theory: in continuous spaces every individual microstate has probability zero. In this paper, we introduce a probabilistic framework based on Soft Logic and Soft Numbers in which point events possess infinitesimal Soft probabilities rather than the classical zero. We show that Soft probability can be interpreted as an infinitesimal refinement of classical probability and discuss its implications for statistical mechanics and Hilbert's sixth problem. In addition, we show rigorously how to construct a Möbius-strip, based on the soft numbers, and we discuss how this Möbius-strip representation with soft numbers allows for a deeper understanding of the nature and character of Hilbert’s sixth problem. 

Inspired by the collapsing of that classical probability to zero, we suggest adding an axiom for an \emph{Infinitesimal Probability} into the list of Kolmogorov's five Probability axioms. Furthermore, we suggest a probabilistic framework based on Soft Numbers for assigning values to \emph{probabilities of impossible} events of a discrete random variable with realizations outside its support (which, in the ordinary probability, collapse to zero). This assignment of Soft Number values is based on an extension of the Pascal triangle to have soft zeros outside of the regular Pascal triangle (with real values) based on factorials of negative numbers.

\end{abstract}

\keywords{probability theory, Hilbert's sixth problem, soft logic, soft probability}

\section{Introduction} \label{sec: Intro}
In 1900, David Hilbert presented his sixth problem, calling for the axiomatization of physical theories, with particular emphasis on probability theory and its integration with geometric descriptions of physical phenomena \cite{Hilbert1902}. The problem highlights two main objectives: (i) to establish probability theory on rigorous axiomatic foundations, comparable to Hilbert's axiomatization of geometry, and (ii) to develop a geometric framework for physical laws, enabling precise mathematical descriptions of reality. A central challenge in statistical mechanics is the derivation of macroscopic statistical laws from microscopic deterministic dynamics.
A major step forward in solving Hilbert's sixth problem, occurred in 1933, when Andrey Kolmogorov established the axiomatic foundations of probability theory using measure theory, introducing the concept of a probability space \cite{Kolmogorov1933} (refer to \cite{Kolmogorov1956Foundation} for English version). In classical probability theory, defined axiomatically by Kolmogorov, the probability of a specific value of a continuous random variable is zero. However, physical systems occupy specific microstates in phase space.
The Hilbert's sixth problem has been discussed in the literature in order to resolve it. One recent example is by Deng et al. \cite{Deng2025} in which the authors rigorously derive the fundamental PDEs of fluid mechanics, such as the compressible Euler and incompressible Navier-Stokes-Fourier equations, starting from the hard sphere particle systems undergoing elastic collisions. Some other approaches to resolve Hilbert's sixth problem are the quantum probability (see e.g., Accardi's survey paper \cite{Accardi2018}).
The geometric framework development part of Hilbert's sixth problem is still an open issue.   
In 2016, Klein and Maimon \cite{Klein_Maimon2016a, Klein_Maimon2016b} have introduced Soft Logic, which makes a refinement of the number zero, by distinguishing between different multiples of zero organized along a zero axis (see also klein and Mimon's later works \cite{Klein_Maimon2019, Klein_Maimon2020, Klein_Maimon2021, Klein_Maimon2023} and their textbook \cite{Klein_Maimon2024}). This allows assigning infinitesimal Soft probabilities to microstates. In 2021, Fivel, Klein and Maimon \cite{FivelKleinMaimon2021} have proposed the Soft Probability, based on the soft logic, with application to decision trees (see also \cite{FivelKleinMaimon2023}). Herein, the authors developed a novel probability framework involving equality as a "soft zero" multiple of a probability density function (PDF).
In this work, we provide an overview of the results, definitions, and axioms of the soft logic and soft probability, and also provide a perspective of a solution to Hilbert's sixth problem from a point of view of soft logic and soft probability. 

\section{Classical Results in Probability Theory} \label{sec: Classical Results in Probability Theory}
In this section, we review key aspects of Kolmogorov’s probability theory. First we introduce Kolmogorov’s Classical axioms with motivation to add a new axiom. Then we introduce some properties of continuous random variables with motivation to introduce the soft logic-based soft probability definitions.

\subsection{Kolmogorov’s Classical Axioms}\label{ssec: Kolmogorov’s Classical Axioms}

Let $E$ be a sample space of elementary events $\xi, \eta, \zeta, \dots$, and let $\mathcal{F}$ be a collection of subsets of $E$ whose elements are called \textit{random events}. Let $\mathrm{Pr}(\cdot):\mathcal{F}\rightarrow[0,1]$ be the probability function. Kolmogorov’s original axioms are as follows (\cite{Kolmogorov1933}, p. 2, Chap. I, Sect. 1, Axioms I-V):

\begin{Axiom}[Field Structure]\label{axiom: kolmogorov Field Structure}
$\mathcal{F}$ is a field of sets, i.e., it is closed under complementary and finite unions.
\end{Axiom}

\begin{Axiom}[Containing the Sample Space]\label{axiom: kolmogorov Containing the Sample Space}
$\mathcal{F}$ contains $E$.
\end{Axiom}

\begin{Axiom}[Non-negativity]\label{axiom: kolmogorov Non-negativity}
To each set $A \in \mathcal{F}$ is assigned a non-negative real number $\mathrm{Pr}(A) \geq 0$. This number $\mathrm{Pr}(A)$ is called the probability of the event $A$.
\end{Axiom}

\begin{Axiom}[Normalization]\label{axiom: kolmogorov Normalization}
$\mathrm{Pr}(E) = 1$.
\end{Axiom}

\begin{Axiom}[Finite Additivity]\label{axiom: kolmogorov Finite Additivity}
If $A$ and $B$ have no element in common, then
\[
\mathrm{Pr}(A \cup B) = \mathrm{Pr}(A) + \mathrm{Pr}(B),
\]
where the symbol $\cup$ denotes the "union" (or equivalently the "addition" or the "or" statement) of the events $A$ and $B$.
\end{Axiom}

We suggest adding one more axiom (Axiom \ref{axiom: Infinitesimal Probability} below) for an Infinitesimal Probability:
\begin{Axiom}[Infinitesimal Probability]\label{axiom: Infinitesimal Probability}
An event $A$ is called \emph{infinitesimal} if its probability is smaller than any positive standard real number:
\[
0 < \mathrm{Pr}(A) < \varepsilon \quad \text{for all standard } \varepsilon > 0.
\]
\end{Axiom}
This requires extending the range of the probability measure to a non-Archimedean field containing infinitesimals. For infinite sample spaces, a sixth classical axiom of countable additivity (continuity) is typically added. The present paper extends the system in a different direction, replacing the classical zero-probability of point events by a soft-probabilistic refinement.

\subsection{Properties of continuous random variable}\label{ssec: Properties of continuous random variable}
In this subsection, we overview some properties of continuous random variables with motivation to introduce the soft logic-based soft probability definitions. A cumulative distribution function (CDF) $F_X$ of a continuous random variable $X$ is
\begin{equation} \label{eq: CDF def}
    F_X(x)=\textrm{Pr}(X\le x)=\Pr(X<x),
\end{equation}
so there is no distinction between strict and non-strict inequalities, i.e., the equality case collapses to zero ($\textrm{Pr}(X=x)=0$). Motivated by the following approximation, based on the PDF $f_X$
\begin{equation}\label{eq: Prob_approx}
\mathrm{Pr}(x<X\leq x+\Delta x) \approx f_X(x)\Delta x,  
\end{equation}
for some small $\Delta x>0$, and the following observation
\begin{equation}\label{eq: PR(X<=x) obs}
    \textrm{Pr}(X\le x)=0\cdot f_X(x)+1\cdot F_X(x),
\end{equation}
the probability of the random variable $X$ being less than or equal to a value $x$ is conceptually decomposed to a sum of a multiple of 1 (the CDF $F_X(x)$) and a multiple of 0 (the PDF $f_X(x)$, the CDF's derivative). Based on those observations in Eqs. \eqref{eq: Prob_approx}-\eqref{eq: PR(X<=x) obs}, the soft probability, has been developed in \cite{FivelKleinMaimon2021}, based on soft logic. 

\section{Soft Logic} \label{sec: Soft Logic}
The "Soft Logic" was developed by Klein and Maimon \cite{Klein_Maimon2016a}-\cite{Klein_Maimon2024} in order to present an extension of the number 0 from a singular point to a continuous line, and to distinguish between $-0$ and $+0$. A new type of numbers called "Soft numbers" have been developed, and has the following form:
\begin{equation}
a  \bar{0}\dot{+}b\bar{1},
\end{equation}
where $a, b \in \mathbb{R}$ such that $a$ denotes the multiples of $0$ in the $0$-axis (denoted by the symbol $\bar{0}$), and $b$ denotes the multiples of $1$ in the $1$-axis (denoted by the symbol $\bar{1}$, which will be omitted). The object $\bar{0}$ squares to zero (i.e., $\bar{0}^2=0\in\mathbb{R}$ but $\bar{0}\notin\mathbb{R}$).
In this section, we outline the axioms, definitions and results of soft logic. This outline is done in reverse, first, we outline the main definitions in soft probability, and then we continue to the fundamental of the soft logic, the soft numbers system with soft calculus and soft coordinate system. We continue with the Soft Logic by representation of the soft numbers on Möbius-strip.

\subsection{Soft Probability}\label{ssec: Soft Probability}
In this subsection, we define the notion of soft probability \cite{FivelKleinMaimon2021}. In order to distinguish from the ordinary probability operator $\textrm{Pr}(\cdot)$, we denote a new symbol for a soft probability operator by $\textrm{Pr}(\cdot)$, and we define it as follows with a continuous random variable $X$, and a real number $x$, that we compare to the continuous random variable. In \cite{FivelKleinMaimon2021} the notion of soft probability is extended to a soft probability of complements, unions and intersections, soft expectation, soft variance and soft entropy.     
\begin{Definition} \label{def: soft probability Ps}
    Let $\mathrm{Ps}(\cdot)$ be the soft probability operator. For a continuous random variable $X$ we define the soft probability of $X\leq x$ by
    \begin{equation}\label{eq: Soft_P}
        \begin{split}
        \mathrm{Ps}(X\leq x)&\overset{\textrm{def}}{=}\mathrm{Ps}(X=x)+\mathrm{Ps}(X<x)\\
        &\overset{\textrm{def}}{=}f_X(x)\bar{0}\dot{+}F_X(x),
        \end{split}
    \end{equation}
    where
    \begin{equation}\label{eq: Soft_P=af}
    \mathrm{Ps}(X=x)\overset{\textrm{def}}{=}f_X(x)\bar{0}, \\
    \end{equation}
    \begin{equation}\label{Soft_P<af}
    \mathrm{Ps}(X<x)\overset{\textrm{def}}{=}F_X(x) \equiv \mathrm{Pr}(X<x).
    \end{equation}
\end{Definition}
The result of the soft probability $\mathrm{Ps}(X\leq x)$ is a soft number, that is a sum of a term proportional to  1 (the CDF $F_X(x)$, for the strict inequality part) and a term proportional to 0 (the PDF $f_X(x)$, the CDF's derivative, for the equality part). The axioms and the definitions regarding the soft numbers are shown in the next subsection section.

\subsection{Soft Numbers} \label{ssec: Soft Numbers}
We provide the main results of the soft numbers based on \cite{Klein_Maimon2016a}-\cite{Klein_Maimon2024} for the completeness of the subject.  
\subsubsection{Axioms and Definitions} \label{sssec: Axioms and Definitions}
The following axioms and definitions are developed  for soft zeros for all real numbers $a$ and $b$:
\begin{Axiom}[Distinction]\label{axiom: SN Distinction} 
$a\neq b\Rightarrow a \bar{0}\neq b \bar{0}$.
\end{Axiom}
\begin{Definition}[Order] \label{def: SN Order} $a< b\Rightarrow a \bar{0}< b \bar{0}$.
\end{Definition}
\begin{Axiom}[Addition]\label{axiom: SN Addition}
$a \bar{0}+b \bar{0}=(a+b) \bar{0}$.
\end{Axiom}
\begin{Axiom}[Nullity]\label{axiom: SN Nullity}
$a \bar{0}\cdot b \bar{0}=0$, i.e., soft numbers "collapse" to zero under multiplications.
\end{Axiom}
\begin{Axiom}[Bridging] \label{axiom: Bridge No.}
There exists a bridge between a zero axis, and a real axis and vice versa, denoted by a pair of a bridge number and its mirror image about the bridge sign. Bridge numbers of a right type
\[
b \bar{1}\perp a \bar{0}
\]
and  bridge numbers of a left type 
\[
a \bar{0} \perp b \bar{1}.
\]
\end{Axiom}
\begin{Axiom}[Non-commutativity]\label{axiom: Bridge Not commute}
Bridging operator $\perp$ does not commute i.e.,  
\[b \bar{1}\perp a \bar{0} \neq a \bar{0} \perp b \bar{1}.\]
\end{Axiom}
\begin{Definition}[Soft Number]\label{def: SN def} \textit{A soft number is defined as a set of the of bridge number pairs of opposite types but with the same components -- the same zero axis number $a \bar{0}$ and the same real number $b$:}
\[
a \bar{0} \dot{+}b=\{a \bar{0}\perp b;b\perp a\bar{0}\},
\]
and we denote the set of the soft numbers by \textbf{\textup{SN}}.
\end{Definition}
From Axiom \ref{axiom: SN Nullity}, we have the following proposition:
\begin{Proposition}
  $a \bar{0}\cdot b \bar{0}+c=c\in\mathbb{R},\forall a,b,c\in\mathbb{R}$. 
\end{Proposition}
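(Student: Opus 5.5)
The plan is to reduce the statement directly to the Nullity Axiom (Axiom~\ref{axiom: SN Nullity}) and then deal with the sum with a real number. Fix arbitrary $a,b,c\in\mathbb{R}$. The expression $a\bar{0}\cdot b\bar{0}+c$ is read with the usual precedence, so the product is formed first.

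\emph{Step 1: evaluate the product.} By Axiom~\ref{axiom: SN Nullity}, $a\bar{0}\cdot b\bar{0}=0$. Here $0$ is the real number zero (the additive identity of $\mathbb{R}$), not a soft zero $k\bar{0}$. This matches the remark that $\bar{0}^2=0\in\mathbb{R}$. So the product leaves the zero axis and lands on the real axis.

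\emph{Step 2: substitute.} The expression becomes $0+c$, and both summands are real numbers, so this is ordinary addition in $\mathbb{R}$. Since $0$ is the additive identity, $0+c=c$, which is real, as claimed.

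\emph{Main subtlety.} The only real obstacle is interpretive: we must justify that the ``$0$'' produced by Nullity is the real zero rather than the soft zero $0\bar{0}$. If it were read as $0\bar{0}$, the sum would be the soft number $0\bar{0}\dot{+}c=\{0\bar{0}\perp c;\,c\perp 0\bar{0}\}$ of Definition~\ref{def: SN def}, and the proposition would need the extra identification of that soft number with $c$. I would settle this by citing the wording of the axiom (``collapse to zero'') together with $\bar{0}^2=0\in\mathbb{R}$. As a fallback, I would note that $0\bar{0}\dot{+}c$ has zero component $0$, so under the natural embedding of $\mathbb{R}$ into \textbf{SN} it is identified with $c$. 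Either way the conclusion is $c\in\mathbb{R}$.
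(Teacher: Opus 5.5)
Your proposal is correct and follows the paper's route: the paper derives this proposition directly from the Nullity Axiom, so that $a\bar{0}\cdot b\bar{0}=0\in\mathbb{R}$ and then $0+c=c$. Your remark that the product must be read as the real zero (backed by $\bar{0}^2=0\in\mathbb{R}$) is a useful clarification that the paper leaves implicit.
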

\noindent Based on the commutativity and the associativity of the real numbers under multiplication, we have the following proposition, with regards to scalar multiplication:  
\begin{Proposition}
    For all $a,b\in\mathbb{R}$, we have:
    \begin{enumerate}
        \item $a\bar{0}=\bar{0}a$,
        \item $b(a\bar{0})=a(b\bar{0})=(ab)\bar{0}$.
    \end{enumerate}
\end{Proposition}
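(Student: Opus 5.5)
The plan is to treat the expression $a\bar{0}$ as the product of the real scalar $a$ with the symbol $\bar{0}$, and to import the commutativity and associativity of real multiplication through the identification $\bar{0}=0\cdot\bar{1}$ on the zero axis. The two items are then handled in turn: first commutativity of the scalar with $\bar{0}$, then associativity of repeated scalar multiplication, which reduces to $(ab)\bar{0}$.

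For item 1, I would argue that $a\bar{0}$ denotes the point at position $a$ on the zero axis. The scalar action of $\mathbb{R}$ on that axis is defined componentwise, multiplying the coefficient of $\bar{0}$, so writing the scalar on the left or the right acts on the same coefficient $a$. Hence $a\bar{0}=\bar{0}a$. Axiom \ref{axiom: SN Distinction} guarantees that distinct coefficients give distinct soft zeros. This makes the coefficient well defined, so the map $a\mapsto a\bar{0}$ is injective and it is legitimate to compute on coefficients.

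For item 2, I would compute $b(a\bar{0})$ by acting with $b$ on the coefficient $a$, which gives $(ba)\bar{0}$. Commutativity of real multiplication then gives $ba=ab$, so $b(a\bar{0})=(ab)\bar{0}$. By the same reasoning, $a(b\bar{0})=(ab)\bar{0}$, and the chain of equalities closes. As a consistency check, I would verify compatibility with Axiom \ref{axiom: SN Addition} for integer $b$: $b(a\bar{0})=a\bar{0}+\cdots+a\bar{0}=(ab)\bar{0}$. This extends to rationals, and the monotonicity given by Definition \ref{def: SN Order} suggests the natural extension to reals.

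The main obstacle is that the paper never formally defines scalar multiplication of a real by a soft zero; it has only addition, order, and nullity. I would therefore make the componentwise scalar action explicit as part of the model, i.e.\ the zero axis as a one-dimensional real vector space spanned by $\bar{0}$, and then both items follow directly. If an axiom-only derivation is wanted instead, the only route is the additivity argument for rationals combined with order for reals. That argument needs an Archimedean/density step, and I expect that step to be the delicate point.
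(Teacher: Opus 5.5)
Your proposal is the paper's argument. The paper justifies the proposition in one line: it reads $a\bar{0}$ as scalar multiplication acting on the coefficient of $\bar{0}$ and invokes commutativity and associativity of real multiplication, which is what you make explicit, with the Distinction axiom making the coefficient well defined. One small correction: drop the phrase about the identification $\bar{0}=0\cdot\bar{1}$, because read literally it makes $\bar{0}$ the real zero, contradicting the Distinction axiom and $\bar{0}\notin\mathbb{R}$; your actual argument never uses it.
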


\subsubsection{Mathematical operations and Functions on Soft Numbers} \label{sssec: math ops and funcs on SNs}
The following mathematical operations hold for the given soft numbers $a \bar{0} \dot{+}b, c \bar{0} \dot{+}d$:
\begin{itemize}
\item \textbf{Addition/subtraction:}
\begin{equation}\label{eq: softNumAdd}
(a\bar{0} \dot{+}b)\pm (c\bar{0} \dot{+}d)=(a\pm c)\bar{0} \dot{+}(b\pm d);
\end{equation}
\item \textbf{Multiplication:}
\begin{equation}\label{eq: softNumMult}
(a\bar{0} \dot{+}b)\cdot (c\bar{0} \dot{+}d)=(ad+bc)\bar{0} \dot{+}bd; 
\end{equation}
\item \textbf{Natural power:}
\begin{equation}\label{eq: softNumNaturalPwr}
(a\bar{0} \dot{+}b)^n=nab^{n-1}\bar{0} \dot{+}b^n.
\end{equation}
\item \textbf{Reciprocal (or Multiplicative Inverse):}
\begin{equation}
\begin{split}
    \frac{1}{a\bar{0} \dot{+}b}&=(a\bar{0} \dot{+}b)^{-1}\\&=-\frac{a}{b^2}\bar{0}\dot{+}\frac{1}{b}, b\neq0 
\end{split}   
\end{equation}
\end{itemize}
Based on the above equations, every polynomial $P_N(x)$ that operates on every soft number $\alpha \bar{0} \dot{+}x$ is given by
\begin{equation}\label{eq: softNumPoly}
P_N(\alpha \bar{0} \dot{+}x)=\alpha P_N'(x)\bar{0} \dot{+}P_N(x).
\end{equation}
where $P_N'(x)$ denotes the derivative of $P_N(x)$. This notion is generalized for analytic functions $f(x)$ so that
\begin{equation}\label{eq: softNumFunc}
f(\alpha \bar{0} \dot{+}x)=\alpha f'(x)\bar{0} \dot{+}f(x).
\end{equation}
In the context of \eqref{eq: softNumFunc}, the soft probability (Definition \ref{def: soft probability Ps}) forms a soft number with the CDF $F_X(x)$ as the function part that is a multiple of 1, and the PDF $f_X(x)=F_X'(x)$ as the derivative part that is a multiple of 0.

\subsubsection{Generalization to Polynomial Rings and Two-dimensional real algebras}
The set of the soft numbers \textbf{\textup{SN}} are defined by two real numbers: one in the real part (multiples of ones) and in the soft part (multiples of zeros). Yet, the notion of soft numbers can be generalized to more mathematical structures. For example, real vectors and real matrices (application examples: representation and assessment of Systems
Thinking competencies \cite{hirschprung2023} and risk evaluation in defense System of Systems \cite{hirschprung2026}). Other examples would be a ring of polynomials $\mathbb{R}[X]$, quotient ring etc.
\begin{Example}[Polynomial Ring]
Let $a_k,b_k,u_k,v_k\in\mathbb{R}$ for $k=0,1,...,n$. Then following holds as an incorporation of the soft numbers and the polynomial ring:
\begin{equation*}
    \sum_{k=0}^n(a_k \bar{0} \softplus b_k)X^k=\left( \sum_{k=0}^n {a_kX^k}\right)\bar{0} \softplus \left( \sum_{k=0}^n {b_kX^k}\right),
\end{equation*}
where the left hand side (LHS) is extension of the soft numbers into a polynomial ring ($\textbf{\textup{SN}}[X]$, polynomials with soft numbers coefficients), while the right hand side (RHS) is an extension of polynomial ring with real coefficients $\mathbb{R}[X]$ into a sum of two real polynomial ring with the form $\mathbb{R}[X] \bar{0} \softplus \mathbb{R}[X]$. Addition of such soft numbers' polynomial ring is done componentwize:
\begin{equation*}
    \sum_{k=0}^n(a_k \bar{0} \softplus b_k)X^k+\sum_{k=0}^n(u_k \bar{0} \softplus v_k)X^k=\left( \sum_{k=0}^n {(a_k+u_k)X^k}\right)\bar{0} \softplus \left( \sum_{k=0}^n {(b_k+v_k)X^k}\right).
\end{equation*}
For the multiplication, denote $A(x)=\sum_{k=0}^na_kX^k$, $B(x)=\sum_{k=0}^nb_kX^k$, $U(x)=\sum_{k=0}^nu_kX^k$ and $V(x)=\sum_{k=0}^nv_kX^k$, and observe that
\begin{equation*}\label{eq: soft ring mult}
\begin{split}
&\left(\sum_{k=0}^n(a_k \bar{0} \softplus b_k)X^k\right)\left(\sum_{k=0}^n(u_k \bar{0} \softplus v_k)X^k\right)=(A(X)\bar{0} \softplus B(X))(U(X)\bar{0} \softplus V(X))\\
&=(A(X)V(X)+B(X)U(X))\bar{0} \softplus B(X)V(X),
\end{split}
\end{equation*}
where the multiplications $A(X)V(X)$, $B(X)U(X)$ and $B(X)V(X)$ are done in the regular sense in the real polynomial ring to obtain new polynomials with a new degree less than or equal to $2n$. In the next example, we will on some quadratic algebra case (a special case of quotient ring), with some special cases of the complex numbers, the dual numbers and the split-complex numbers.
\end{Example}
 
\begin{Example}[Quadratic Algebra]
In this example, we focus on the quotient ring $\mathbb{R}[X]/(X^2-r)$, where the ideal $(X^2-r)$ satisfies $X^2=r$ for some $r\in\mathbb{R}$.\footnote{We do not assume collapsing $\varepsilon\bar{0}$ to zero.} 
 
Let $a_k,b_k,u_k,v_k\in\mathbb{R}$ for $k=0,1$. With $A=a_0+a_1X$, $B=b_0+b_1X$, $U=u_0+u_1X$ and $V=v_0+v_1X$, the following holds: 
\begin{equation*}
\begin{split}
    [&(a_0\bar{0}\softplus b_0)+(a_1\bar{0}\softplus b_1)X][(u_0\bar{0}\softplus v_0)+(u_1\bar{0}\softplus v_1)X]=[AV+BU]\bar{0} \softplus BV\\
    =&[(a_0+a_1X)(v_0+v_1X)+(b_0+b_1X)(u_0+u_1X)]\bar{0}\softplus[(b_0+b_1X)(v_0+v_1X)]\\
    =&[(a_0v_0+b_0u_0+[a_1v_1+b_1u_1]r)+(a_0v_1+a_1v_0+b_0u_1+b_1u_0)X]\bar{0}\\ &\softplus[b_0v_0+b_1v_1r+(b_1v_0+b_0v_1)X].
\end{split}
\end{equation*}
With a similar approach, we can calculate a natural power with the form $[(u_0\bar{0}\softplus v_0)+(u_1\bar{0}\softplus v_1)X]^n$ and likewise any polynomial $P_N(\cdot)$ of applied on $[(u_0\bar{0}\softplus v_0)+(u_1\bar{0}\softplus v_1)X]$. We focus on the general case of an analytic function $f$. Then,
\begin{equation*}
\begin{split}
f&((u_0\bar{0}\softplus v_0)+(u_1\bar{0}\softplus v_1)X)
=f((u_0+u_1X)\bar{0}\softplus(v_0+v_1X))\\
&=f(U \bar{0} \dot{+}V)=U f'(V)\bar{0} \dot{+}f(V)\\
&=(u_0+u_1X)f'(v_0+v_1X)\bar{0} \dot{+}f(v_0+v_1X),
\end{split}
\end{equation*}
where the derivative $f'(V)$ is calculated in such as for real functions. The followings are special cases of this Quadratic Algebra:
\begin{enumerate}
    \item \emph{Complex numbers (with $X=i$}): $U=u_0+u_1i$ and $V=v_0+v_1i$ are complex numbers, $f(V)=f_0(v_0,v_1)+if_1(v_0,v_1)$ is a complex differentiable function such that $f'(V)=\frac{\partial f_0}{\partial v_0}+i\frac{\partial f_0}{\partial v_1}=\frac{\partial f_1}{\partial v_1}-i\frac{\partial f_1}{\partial v_0}$;
    
    \item \emph{Split-complex numbers (with $X=j$}): $U=u_0+u_1j$ and $V=v_0+v_1j$ are split-complex numbers, $f(V)=f_0(v_0,v_1)+jf_1(v_0,v_1)$ is a split-complex differentiable function such that $f'(V)=\frac{\partial f_0}{\partial v_0}+j\frac{\partial f_0}{\partial v_1}=\frac{\partial f_1}{\partial v_1}+j\frac{\partial f_1}{\partial v_0}$; and

    \item \emph{Dual numbers (with $X=\varepsilon$}): $U=u_0+u_1\varepsilon$ and $V=v_0+v_1\varepsilon$ are dual numbers, $f(V)=f(v_0)+ f'(v_0)v_1\varepsilon$, $f'(V)=f'(v_0)+ f''(v_0)v_1\varepsilon$ and \[f(U \bar{0} \dot{+}V)=[u_0f'(v_0)+(u_0f''(v_0)v_1+u_1f'(v_0))\varepsilon]\bar{0} \dot{+}[f(v_0)+ f'(v_0)v_1\varepsilon],\]i.e., $f(U \bar{0} \dot{+}V)$ contains information on $f$, and on the first two derivatives $f'$ and $f''$.
\end{enumerate}
   
\end{Example}

\begin{Remark}
    A dual number $a\varepsilon+b$ have two possible representations as matrices in $M_2(\mathbb{R})$:
\begin{equation*}
\left\{  \begin{pmatrix}
b &  a\\
0 &  b\\
\end{pmatrix}, \begin{pmatrix}
b &  0\\
a &  b\\
\end{pmatrix}\right\}.    
\end{equation*}
Usually, only one representation is taken to be the equivalent matrix form of the dual numbers. In contrast, the soft number $a\bar{0}\softplus b$, as a set, includes two representations of numbers (the bridge numbers $a\bar{0}\fb b$ and $b\fb a\bar{0}$) with nilpotent components, the soft zero $a\bar{0}$, bridged from the left and the right of the real part $b$. In that sense, the soft numbers emphasize the two representations of combining a real number with a nilpotent part in a single set of matrices, rather than the traditional dual numbers.   
\end{Remark}

\subsubsection{Soft Coordinate Systems}\label{sssec: Soft Coordinate Systems}
We developed a new coordinate axis, as presented in Figure 1 below. It starts from zero to 1 horizontally and then it turns $90^\circ$ from 1 to infinity.

\begin{figure}[ht]
\centering
\begin{tikzpicture}[scale=1.2, line cap=round, line join=round]
    \draw[-{Latex[length=4mm,width=2.5mm]}, line width=1.2pt] (0,1) -- (0,4.5);
    \foreach \y/\lab in {1/1,2/2,3/3,4/4}{
        \draw[line width=1.2pt] (-0.18,\y) -- (0.18,\y);
        \node[right=10pt] at (0,\y) {\large $\lab$};
    }
    \draw[line width=1.2pt] (-2.6,1) -- (0,1);
    \fill (-2.3,1) circle (3.2pt);
    \fill (-1.1,1) circle (3.2pt);
    \draw[line width=1.2pt] (-1.1,1) -- (0,2);
    \node[below=6pt] at (-1.1,1) {\Large $\dfrac{1}{2}$};
    \node[below=28pt] at (-1.1,0.5) {\Large $x$};
    \node[right=18pt] at (0.55,2) {\Large $1/x$};
\end{tikzpicture}
\caption{The Soft coordinate axis (example of a line connected between $x=\frac{1}{2}$ and $\frac{1}{x}=2$)}
\label{fig: The Soft coordinate axis}
\end{figure}
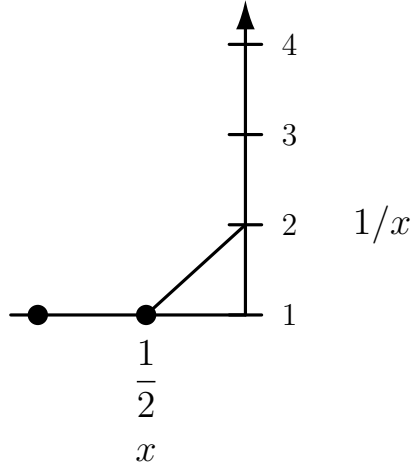

\begin{Lemma}\label{lemma: 1-1:(0,1]->[1,inf)}
There exists a one-to-one correspondence between the segment $(0,1]$ and the segment $[1,\infty)$.
\end{Lemma}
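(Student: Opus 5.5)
The plan is to exhibit the bijection explicitly, using the map already suggested by Figure~\ref{fig: The Soft coordinate axis}. There a point $x$ on the horizontal segment is joined to the point $1/x$ on the vertical ray. I take
\[
\varphi:(0,1]\to[1,\infty),\qquad \varphi(x)=\frac{1}{x},
\]
and verify in order that it is well defined, injective, and surjective.

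\emph{Well-definedness.} For $0<x\le 1$ we have $1/x\ge 1$, so $\varphi(x)$ lies in $[1,\infty)$. The division is legitimate because the left endpoint $0$ is excluded from the domain. This is the one point that needs care: the half-open form $(0,1]$ is exactly what makes $1/x$ finite. The endpoints correspond as expected: $1\mapsto 1$, while $x\to 0^+$ corresponds to the unbounded end of $[1,\infty)$.

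\emph{Injectivity.} If $1/x=1/y$ with $x,y>0$, then multiplying by $xy$ gives $x=y$. Equivalently, $\varphi$ is strictly decreasing on $(0,1]$.

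\emph{Surjectivity.} Given $y\in[1,\infty)$, set $x=1/y$. Then $0<x\le 1$ and $\varphi(x)=y$. Thus $\varphi$ is a bijection, and it is its own inverse, $\psi(y)=1/y$.

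No step here is a real obstacle; the only subtlety is the endpoint bookkeeping noted above. I would close with a geometric reading for the soft coordinate axis. The segments joining $x$ to $1/x$ in the figure (such as the one from $\tfrac12$ to $2$) realize this correspondence, and since $\varphi$ is order-reversing, these connecting segments do not cross one another.
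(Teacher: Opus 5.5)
Your proof is correct and uses the same map $x\mapsto 1/x$ as the paper. The paper simply asserts that this function is a bijection between $(0,1]$ and $[1,\infty)$; you additionally check that it is well defined, injective and surjective.
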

\begin{proof} Consider the following function $f(x)=\dfrac{1}{x}$. This function creates a one-to-one correspondence between $(0,1]$ and $[1,\infty)$.
\end{proof}

Figure \ref{fig: The Soft coordinate axis} contains an example of a line between $x=\frac{1}{2}$ and $\frac{1}{x}=2$. Also, one can draw such a line for all positive real $x$'s other than 1, from $x$ to $1/x$ as seen in Figure \ref{fig: lines between x and 1/x}.

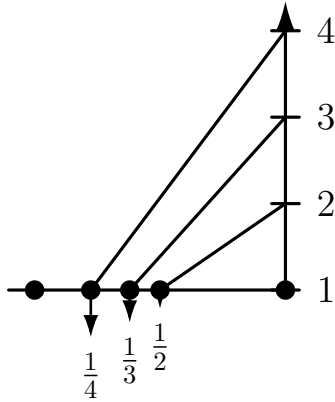
\begin{figure}[ht]
\centering
\begin{tikzpicture}[scale=1.15, line cap=round, line join=round]
    \draw[-{Latex[length=4mm,width=2.5mm]}, line width=1.2pt] (0,1) -- (0,4.35);
    \foreach \y/\lab in {1/1,2/2,3/3,4/4}{
        \draw[line width=1.2pt] (-0.16,\y) -- (0.16,\y);
        \node[right=8pt] at (0,\y) {\Large $\lab$};
    }
    \draw[line width=1.2pt] (-3.2,1) -- (0,1);
    \fill (-2.9,1) circle (3.2pt);
    \fill (-2.25,1) circle (3.2pt);
    \fill (-1.8,1) circle (3.2pt);
    \fill (-1.45,1) circle (3.2pt);
    \fill (0,1) circle (3.2pt);
    \draw[line width=1.2pt] (-2.25,1) -- (0,4);
    \draw[line width=1.2pt] (-1.8,1) -- (0,3);
    \draw[line width=1.2pt] (-1.45,1) -- (0,2);
    \draw[-{Latex[length=3mm,width=2mm]}, line width=1pt] (-2.25,1) -- (-2.25,0.45);
    \draw[-{Latex[length=3mm,width=2mm]}, line width=1pt] (-1.8,1) -- (-1.8,0.62);
    \draw[-{Latex[length=3mm,width=2mm]}, line width=1pt] (-1.45,1) -- (-1.45,0.78);
    \node[below=2pt] at (-2.25,0.45) {\Large $\frac{1}{4}$};
    \node[below=2pt] at (-1.8,0.62) {\Large $\frac{1}{3}$};
    \node[below=2pt] at (-1.45,0.78) {\Large $\frac{1}{2}$};
\end{tikzpicture}
\caption{The lines between $x=1/n$ (horizontal axis) and $1/x=n$ (vertical axis). We illustrate for some natural numbers $n$, but the connections lines are valid for all real numbers $n\ge1$.}
\label{fig: lines between x and 1/x}
\end{figure}

\begin{Lemma}
All the lines connecting $x$ to $1/x$ (for all nonzero real $x$) intersect at a single point.
\end{Lemma}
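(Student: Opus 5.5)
Fix coordinates matching the soft coordinate axis of Figure \ref{fig: The Soft coordinate axis}. The horizontal piece of the axis is the segment from $0$ to $1$. The vertical piece starts at the point $1$ and continues upward to infinity. The corner point $1$ is placed at the origin of the plane. A number $x\in(0,1]$ on the horizontal piece is then the point $P_x=(x-1,0)$, and a number $y\ge 1$ on the vertical piece is the point $Q_y=(0,y-1)$. By Lemma \ref{lemma: 1-1:(0,1]->[1,\infty)}, each $x\in(0,1)$ is paired with $y=1/x\in(1,\infty)$, and the lines in question are the lines $L_x$ through $P_x$ and $Q_{1/x}$. Negative $x$ would be handled by the same construction on the mirror-image (negative) half of the axis. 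The case $x=1$ is degenerate, since $P_1=Q_1$ is the corner itself and no line is determined, so I exclude it as Figure \ref{fig: lines between x and 1/x} does.

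The plan is to write $L_x$ in intercept form and find a common point. The line $L_x$ has $x$-intercept $x-1$ and $y$-intercept $\frac1x-1=\frac{1-x}{x}$. Its equation is therefore
\[
\frac{X}{x-1}+\frac{Y}{(1-x)/x}=1,\qquad\text{i.e.}\qquad -X+xY=1-x .
\]
The condition for a fixed point $(X_0,Y_0)$ to lie on every $L_x$ is that $-X_0+xY_0=1-x$ holds for all $x$. Viewed as a polynomial identity in $x$, this means the constant terms and the coefficients of $x$ must match: $-X_0=1$ and $Y_0=-1$. This yields the common point $(-1,-1)$. Geometrically, $(-1,-1)$ lies directly below the point $0$ of the horizontal piece and directly to the left of the vertical line carrying the point $1$. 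I would then verify the answer directly: for each $x$, the point $(-1,-1)$ together with $P_x$ and $Q_{1/x}$ gives a vanishing $2\times2$ determinant, which confirms collinearity.

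The main obstacle is not the algebra but the interpretation. The statement says "for all nonzero real $x$" without fixing an embedding, and the concurrency is sensitive to the chosen coordinates. Placing $0$ at distance $1$ from the corner and using unit scale on both pieces of the axis is exactly what makes the intercepts $x-1$ and $\frac{1-x}{x}$ combine into a relation that is linear in $x$. I would therefore state the coordinate convention explicitly at the outset. For negative $x$, I would reflect the construction through the corner and check that the mirrored common point is consistent with the intended picture. If it is not, I would state the lemma separately for each sign and note that the two pencils of lines have symmetric centers.
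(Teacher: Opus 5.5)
Your proof is correct, but it takes an analytic route where the paper argues synthetically.

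The paper works directly on Figure~\ref{fig: The intersection point}. It extends the line from $x$ (point $C$) to $1/x$ (point $E$) until it meets the vertical through the point $0$ (point $A$) at depth $h$. The similar right triangles $ABC\sim CDE$ then give $\frac{x}{h}=\frac{1-x}{1/x-1}$. Since the right-hand side equals $x$, this yields $h=1$ for every $x$.

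You instead fix coordinates, reduce each line to $-X+xY=1-x$, and read off the common point $(-1,-1)$ by matching coefficients in $x$. This is exactly the paper's point one unit below $0$. The underlying computation is the same, but your version has two advantages. It derives the center and its uniqueness, rather than intersecting each line with a transversal chosen in advance. It also makes the exclusion of $x=1$ explicit; there the paper's ratio is $0/0$ and triangle $CDE$ collapses, a case the paper passes over. The paper's version, in turn, is coordinate-free and reads straight off the picture.

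Your computation shortens further if you translate so the corner sits at $(1,1)$. Then $x$ and $1/x$ become $(x,1)$ and $(1,1/x)$, and both lie on the line $Y=X/x$ through the origin.

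Two side remarks. First, your concern that concurrency depends on using unit scale on both pieces is unfounded. With horizontal length $s$ and vertical unit $t$, the same coefficient matching gives the center $(-s,-t)$, because rescaling the two axes independently is an affine map and preserves collinearity. All that matters is that both pieces are linearly parametrized with $1$ at the corner. Second, neither argument really covers negative $x$: the paper's proof, like yours, handles only $0<x<1$, with $x>1$ giving the same lines. A single center for all nonzero $x$ holds only if the symmetry producing the negative half fixes that center, e.g.\ the point reflection through it.
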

\begin{proof} Let us observe the following drawing with a line, which connects $x$ with $1/x$ as shown in Figure \ref{fig: The intersection point}.
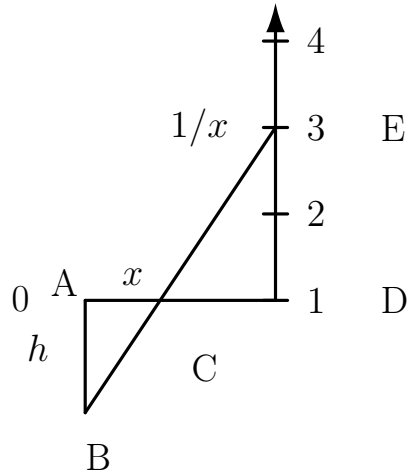
\begin{figure}[ht]
\centering
\begin{tikzpicture}[scale=1.15, line cap=round, line join=round]
    \draw[-{Latex[length=4mm,width=2.5mm]}, line width=1.2pt] (0,1) -- (0,4.45);
    \foreach \y in {1,2,3,4}{
        \draw[line width=1.2pt] (-0.14,\y) -- (0.14,\y);
    }
    \node[right=8pt] at (0,1) {\Large $1$};
    \node[right=8pt] at (0,2) {\Large $2$};
    \node[right=8pt] at (0,3) {\Large $3$};
    \node[right=8pt] at (0,4) {\Large $4$};
    \node[right=36pt] at (0,1) {\Large D};
    \node[right=36pt] at (0,3) {\Large E};
    \draw[line width=1.2pt] (-2.2,1) -- (0,1);
    \draw[line width=1.2pt] (-2.2,1) -- (-2.2,-0.3);
    \draw[line width=1.2pt] (-2.2,-0.3) -- (0,3);
    \node[left=12pt] at (-1.8,1.2) {\Large A};
    \node[below=8pt] at (-2.05,-0.3) {\Large B};
    \node[below right=2pt] at (-1.15,0.55) {\Large C};
    \node[above=2pt] at (-1.65,1) {\Large $x$};
    \node[left=10pt] at (-0.12,3) {\Large $1/x$};
    \node[left=10pt] at (-2.2,0.45) {\Large $h$};
    \node[left=10pt] at (-2.42,1.02) {\Large 0};
\end{tikzpicture}
\caption{The intersection point}
\label{fig: The intersection point}
\end{figure}
$\triangle ABC \sim \triangle CDE$ as triangles with equal angles. Consequently,
\[
\frac{AC}{AB}=\frac{CD}{ED}
\]
and therefore,
\[
\frac{x}{h}=\frac{1-x}{\frac{1}{x}-1}
\]
and hence
\[
h=1.
\]
This means that all the lines that connect $x$ to $1/x$ intersect at the same point, which is located one unit below zero.
\end{proof}

The intersection point denotes the beginning of the Soft coordinate system. We call this point ``absolute zero.'' The distance from absolute zero to $+\bar{0}$ is one unit. We suggest extending this new coordinate axis symmetrically as shown in Figure \ref{fig: distinction between -0 and +0}.
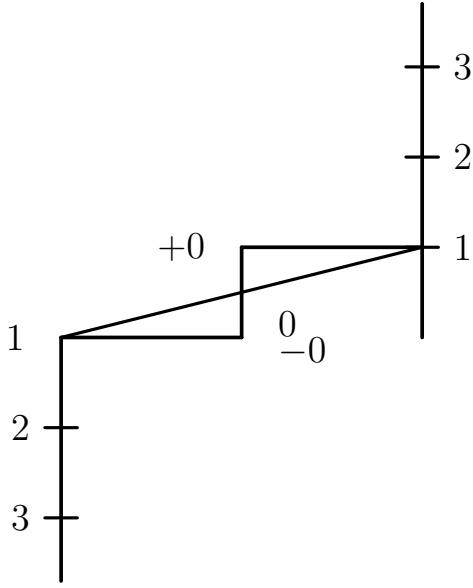
\begin{figure}[ht]
\centering
\begin{tikzpicture}[scale=1.2, line cap=round, line join=round]
    \draw[line width=1.4pt] (2,1) -- (2,4.7);
    \draw[line width=1.4pt] (-2, -1.7) -- (-2,1);
    \foreach \y/\lab in {2/1,3/2,4/3}{
        \draw[line width=1.2pt] (1.82,\y) -- (2.18,\y);
        \node[right=8pt] at (2,\y) {\Large $\lab$};
    }
    \foreach \y/\lab in {0/2,-1/3}{
        \draw[line width=1.2pt] (-2.18,\y) -- (-1.82,\y);
        \node[left=8pt] at (-2,\y) {\Large $\lab$};
    }
    \draw[line width=1.4pt] (-2,1) -- (0,1);
    \draw[line width=1.4pt] (0,1) -- (0,2);
    \draw[line width=1.4pt] (0,2) -- (2,2);
    \draw[line width=1.2pt] (-2,1) -- (2,2);
    \node[left=10pt] at (-2,1) {\Large $1$};
    \node[left=10pt] at (0,2) {\Large $+0$};
    \node[right=10pt] at (0,1.15) {\Large $0$};
    \node[right=10pt] at (0,0.85) {\Large $-0$};
\end{tikzpicture}
\caption{The distinction between $-0$ and $+0$}
\label{fig: distinction between -0 and +0}
\end{figure}
We now have three zeros instead of one zero. One zero is opposite to the number -1, and it is not identical to the zero opposite the number $+1$. Hence, we suggest denoting the two different ``zeros'' as $-\bar{0}$ and $+\bar{0}$. Furthermore, the discovery of the zero line enables us to find the locations of the multiples of Soft zero as.
The zero line is developed according to the three axioms presented above. Figure \ref{fig: complete Soft coordinate system} presents the extended new coordinate system for positive and negative numbers with the multiples of Soft zero.
\begin{figure}[ht]
\centering
\begin{tikzpicture}[scale=0.95, line cap=round, line join=round]
    \def\xL{-2.2}
    \def\xM{0}
    \def\xR{2.2}
    \draw[line width=1.2pt] (\xL,-5) -- (\xL,5);
    \draw[line width=1.2pt] (\xM,-5) -- (\xM,5);
    \draw[line width=1.2pt] (\xR,-5) -- (\xR,5);
    \draw[line width=1.2pt] (\xL,5) -- (\xR,5);
    \draw[line width=1.2pt] (\xL,-5) -- (\xR,-5);
    \draw[line width=1.2pt] (\xL,0) -- (\xR,0);
    \draw[dashed, line width=0.9pt] (\xL,-1) -- (\xR,1);
    \draw[dashed, line width=0.9pt] (\xL,1) -- (\xR,-1);
    \foreach \y in {-5,-4,-3,-2,-1,0,1,2,3,4,5}{
        \fill (\xL,\y) circle (2.8pt);
    }
    \foreach \y in {-5,-4,-3,-2,-1,1,2,3,4,5}{
        \fill (\xM,\y) circle (2.8pt);
    }
    \fill (\xM,0) circle (2.8pt);
    \foreach \y in {-5,-4,-3,-2,-1,1,2,3,4,5}{
        \fill (\xR,\y) circle (2.8pt);
    }
    \fill (-1.1,5) circle (2.8pt);
    \fill ( 1.1,5) circle (2.8pt);
    \fill (-1.1,-5) circle (2.8pt);
    \fill ( 1.1,-5) circle (2.8pt);
    \foreach \y/\lab in {
        5/-5, 4/-4, 3/-3, 2/-2, 1/-1,
        -1/1, -2/2, -3/3, -4/4, -5/5
    }{
        \node[left=8pt] at (\xL,\y) {\Large $\lab$};
    }
    \node[left=8pt] at (\xM, 5.3) {\Large $5\!\cdot\!\bar{0}$};
    \node[left=8pt] at (\xM, 4) {\Large $4\!\cdot\!\bar{0}$};
    \node[left=8pt] at (\xM, 3) {\Large $3\!\cdot\!\bar{0}$};
    \node[left=8pt] at (\xM, 2) {\Large $2\!\cdot\!\bar{0}$};
    \node[left=8pt] at (\xM, 1) {\Large $1\!\cdot\!\bar{0}$};
    \node[left=8pt] at (\xM,-1) {\Large $-1\!\cdot\!\bar{0}$};
    \node[left=8pt] at (\xM,-2) {\Large $-2\!\cdot\!\bar{0}$};
    \node[left=8pt] at (\xM,-3) {\Large $-3\!\cdot\!\bar{0}$};
    \node[left=8pt] at (\xM,-4) {\Large $-4\!\cdot\!\bar{0}$};
    \node[left=8pt] at (\xM,-5.3) {\Large $-5\!\cdot\!\bar{0}$};
    \node[left=8pt] at (\xM,0.26) {\Large $0\!\cdot\!\bar{0}$};
    \foreach \y/\lab in {
        5/5, 4/4, 3/3, 2/2, 1/1,
        -1/-1, -2/-2, -3/-3, -4/-4, -5/-5
    }{
        \node[right=8pt] at (\xR,\y) {\Large $\lab$};
    }
    \draw[latex-latex,dashed] (1.1,4.9)--(1.1,0.0);
    \draw[latex-latex,dashed] (0.0,4.5)--(1.1,4.5);
    \node[right=10pt] at (0.3,2.5) {\colorbox{white}{\Large $A$}};
    \node[right=10pt] at (-0.25,4.2) {\colorbox{white}{\Large $B$}};
    \node[above=6pt] at (1.2,5.1) {\Large $C$};

\end{tikzpicture}
\caption{The complete Soft coordinate system. The soft number point $C$ is presented by the height $A$ (=5) and width $B$ ($\approx$0.5).}
\label{fig: complete Soft coordinate system}
\end{figure}
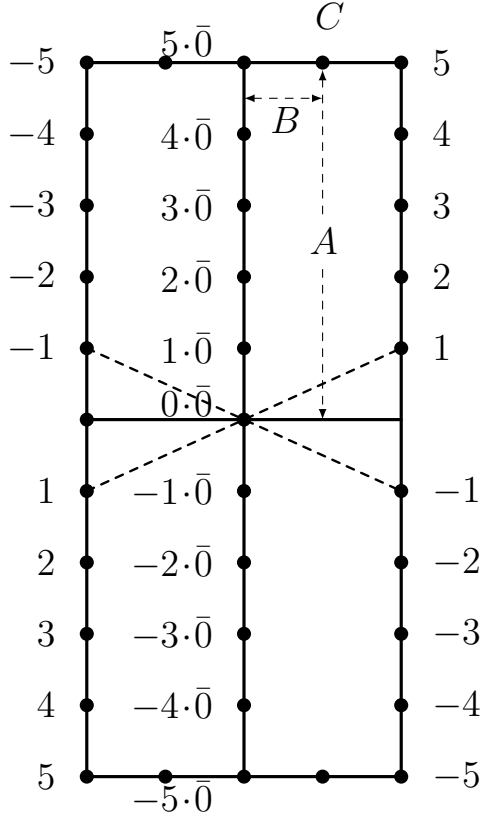

In Figure \ref{fig: complete Soft coordinate system}, the middle vertical line is the zero axis that includes all the multiples of 0: $x\bar{0}$. The right-hand vertical line is the 1 axis that includes all the multiples of 1: $y\bar{1}$. The left-hand vertical line is the 1 axis that includes all the multiples of 1: $y\bar{1}$ but in the opposite direction. This coordinate system allows us to present all of the real numbers and all of the Soft zeros. We can now use this coordinate system for representing various Soft numbers. Given a Soft number, where $x$ and $y$ are any real numbers, we represent both elements of a Soft number as two points, on either side of a page. The point $x\bar{0}\fb y$ appears on one side of the page and the point $y\fb x\bar{0}$ is the same point but on the other side of the page. Thus,
\[
c=x\bar{0}\fb y,
\]
is located on the front side of the page, while
\[
c'=y\fb x\bar{0}
\]
is located on the back side of the page. As the infinite strip presented (partially) in Figure \ref{fig: Soft Möbius Map} is intended for the presentation of Soft numbers, we call it a ``Soft number strip'' (SNS).

\begin{Definition} \label{def: point on SNS}
 Let $C$ be any point on the SNS. The height of point $C$ is the vertical distance from $C$ to the horizontal segment with the absolute zero at its center. $A$ denotes the height with a sign. The width of point $C$ is the horizontal distance from $C$ to the zero line with a sign and is denoted by $B$.   
\end{Definition}
 
The notations in Definition \ref{def: point on SNS} provide every point $C$ on the SNS with two parameters, $A$ and $B$, where $-\infty<A<\infty,\ -1\le B\le 1$. The condition $A>0$ is satisfied in the positive part of the SNS; $A<0$, in its negative part, or correspondingly, above and below the horizontal segment containing the absolute zero, while on this segment $A=0$.
For the second parameter, $B$, there is $B=0$ on the zero axis, $B=1$ on the right line bounding the SNS, $B=-1$ on the left line bounding the SNS and otherwise $-1<B<1$.
Given $A\in R,\ B\in[-1,1]$, we define $C(A,B)$ as follows:
For $0\le B\le 1,$
\[
c=(1-B)A\bar{0}\fb BA\bar{1},
\]
located on the front side of the page
\[
c'=BA\bar{1}\fb (1-B)A\bar{0},
\]
located on the back side of the page,
\[
C=x\bar{0}\softplus y=\{c,c'\}
\]
For $-1\le B\le 0$
\[
c=(1+B)A\bar{0}\fb BA\bar{1},
\]
located on the front side of the page
\[
c'=BA\bar{1}\fb (1+B)A\bar{0},
\]
located on the back side of the page
\[
x\bar{0}\softplus y=\{c,c'\}
\]
Note that if we use $|B|$, the two equations above are the same.
So, in brief, and in terms of the absolute value of $B$:
\[
c=(1-|B|)A\bar{0}\fb BA\bar{1},
\]
located on the front side of the page,
\[
c'=BA\bar{1}\fb (1-|B|)A\bar{0},
\]
located on the back side of the page, and therefore
\[
x\bar{0}\softplus y=\{c,c'\}
\]
Hence, by a coefficient comparison of the real part and the Soft part in
\begin{equation}\label{eq: AB to xy}
\begin{split}
  x&=(1-|B|)A,\\
  y&=BA,
\end{split}    
\end{equation}
or equivalently, after solving the above, $(A,B)$, using some arithmetic of absolute value and sign functions (we use the multivalued sign function so that $\operatorname{sign}(0)=\{\pm1\}$):
\begin{equation}\label{eq: xy to AB}
\begin{split}
    A&=(|x|+|y|)\operatorname{sign}(x),\\
    B&=\frac{y\cdot \operatorname{sign}(x)}{|x|+|y|}.
\end{split}    
\end{equation}

We show that Eqs. \eqref{eq: AB to xy}-\eqref{eq: xy to AB} are inverse functions of each other for $x\neq 0$ \textbf{and} $y\neq 0$ \textbf{and} $A\neq 0$ \textbf{and} $B\neq 0$ \textbf{and}  $B\neq 1$ \textbf{and} $B\neq -1$. In addition, we explore the relations between Eqs. \eqref{eq: AB to xy}-\eqref{eq: xy to AB} when $x,y,A=0;B=0,\pm1$.

Recall that a function $f:X \to Y$ is \emph{invertible} iff it is \emph{bijective} iff it is \emph{injective (one to one)} and \emph{surjective (onto)}.
A function $f:X \to Y$ is injective if for all $x_1,x_2 \in X$, if $f(x_1)=f(x_2)$, then $x_1=x_2$. A function $f:X \to Y$ is surjective if for all $y\in Y$ there is some $x\in X$ (not necessarily unique) such that $y=f(x)$ (if unique, then bijective).

We will prove that Eqs. \eqref{eq: AB to xy}-\eqref{eq: xy to AB} form a bijection of each other. We denote the RHS of Eq. \eqref{eq: AB to xy} by $f(A,B)$ (Eq. \eqref{eq: f(A,B)} below), and the RHS of Eq. \eqref{eq: xy to AB} by $g(x,y)$ (Eq. \eqref{eq: g(x,y)} below). First we will prove in Therorem \ref{theorem: SN-SP bijective} that $f$ and $g$ are bijective, and then we will prove in Corollary \ref{corrollary: f and g inverse ea other} that $f$ and $g$ are inverse functions of each other.  

\begin{Theorem}\label{theorem: SN-SP bijective}
    Suppose that $x,y,A,B\in\mathbb{R}$ such that $B\in[-1,1]$. Let
    \begin{alignat}{2}\label{eq: f(A,B)}
        f(A,B)=((1-|B|)A,BA) && \qquad \text{(cf. \eqref{eq: AB to xy})},
    \end{alignat}
    and
    \begin{equation}\label{eq: g(x,y)}
        g(x,y)=(\mathrm{sign}(x)(|x|+|y|),\mathrm{sign}(x)\frac{y}{|x|+|y|}) \text{ (cf. \eqref{eq: xy to AB})}.
    \end{equation}
    If $x\neq 0$ \textbf{and} $y\neq 0$ \textbf{and} $A\neq 0$ \textbf{and} $B\neq 0$ \textbf{and}  $B\neq 1$ \textbf{and} $B\neq -1$, then $f$ and $g$ are bijective functions.
\end{Theorem}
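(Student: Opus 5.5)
The plan is to make the domains in the statement explicit and then show that $g\circ f$ and $f\circ g$ are both identities. Having a two-sided inverse gives bijectivity of $f$ and of $g$ at once, and it also prepares Corollary \ref{corrollary: f and g inverse ea other}. Concretely, I take
\[
D_1=\{(A,B)\in\mathbb{R}^2:\ A\neq 0,\ B\in(-1,0)\cup(0,1)\},\qquad D_2=\{(x,y)\in\mathbb{R}^2:\ x\neq 0,\ y\neq 0\},
\]
and regard $f:D_1\to D_2$ and $g:D_2\to D_1$.

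\emph{Step 1 (well-definedness).} For $(A,B)\in D_1$ we have $1-|B|>0$ and $A\neq0$, so $x=(1-|B|)A\neq 0$. Since also $B\neq0$, we get $y=BA\neq 0$. Hence $f(D_1)\subseteq D_2$.

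For $(x,y)\in D_2$, $\mathrm{sign}(x)$ is single-valued. This is exactly why the hypothesis $x\neq0$ is needed: the paper uses a multivalued sign with $\mathrm{sign}(0)=\{\pm1\}$. Also $|x|+|y|>0$, so $g$ is defined. Its first component $\mathrm{sign}(x)(|x|+|y|)$ is nonzero. Its second component has absolute value $|y|/(|x|+|y|)$, which lies strictly between $0$ and $1$ because $x\neq0$ and $y\neq0$. Hence $g(D_2)\subseteq D_1$.

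\emph{Step 2 ($g\circ f=\mathrm{id}_{D_1}$).} Let $(x,y)=f(A,B)$. Since $1-|B|>0$, we have $\mathrm{sign}(x)=\mathrm{sign}(A)$. Moreover,
\[
|x|+|y|=(1-|B|)|A|+|B||A|=|A|.
\]
The first component of $g(x,y)$ is therefore $\mathrm{sign}(A)|A|=A$. The second component is $\mathrm{sign}(A)\,BA/|A|=B$.

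\emph{Step 3 ($f\circ g=\mathrm{id}_{D_2}$).} Let $(A,B)=g(x,y)$. Then
\[
|B|=\frac{|y|}{|x|+|y|},\qquad 1-|B|=\frac{|x|}{|x|+|y|}.
\]
So $(1-|B|)A=\mathrm{sign}(x)|x|=x$, and $BA=\mathrm{sign}(x)^2\,y=y$.

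\emph{Conclusion.} Steps 2 and 3 show that $f$ is injective (it has left inverse $g$) and surjective onto $D_2$ (it has right inverse $g$), and symmetrically for $g$. Both maps are therefore bijections.

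\emph{Main obstacle.} The computations are routine. The real difficulty is fixing the domains and codomains correctly. In particular, $g$ must land in $D_1$ and not merely in $\mathbb{R}\times[-1,1]$, which needs the strict inequality $0<|B|<1$. The excluded values $B=\pm1$, $B=0$ and $A=0$ correspond exactly to $x=0$, $y=0$, and the origin, where the multivalued sign breaks single-valuedness.
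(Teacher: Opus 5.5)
Your proof is correct. Its two computations are exactly the paper's surjectivity steps: your $f\circ g=\mathrm{id}_{D_2}$ is the paper's Step 2, and your $g\circ f=\mathrm{id}_{D_1}$ is its Step 5.

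The difference is in the decomposition. The paper also proves injectivity of $f$ and of $g$ separately (its Steps 1 and 4). For $f$, it assumes $f(A,B)=f(a,b)$ and writes implicit relations such as $a=x+|y|\,\mathrm{sign}(a)$. It then uses sign identities like $\mathrm{sign}(a)=\mathrm{sign}(A)=\mathrm{sign}(x)$ to force $(a,b)=(A,B)=g(x,y)$, and argues similarly for $g$. Bijectivity then follows from ``injective and surjective''. Those injectivity arguments in effect re-establish $g\circ f=\mathrm{id}$ and $f\circ g=\mathrm{id}$ a second time. Your observation that a two-sided inverse gives injectivity and surjectivity at once makes them redundant.

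Your route is shorter and avoids the sign bookkeeping. It also checks explicitly that $f(D_1)\subseteq D_2$ and $g(D_2)\subseteq D_1$. The paper only uses this implicitly, e.g.\ when it asserts $b(x,y)\in(-1,1)\setminus\{0\}$ without verification. And it gives Corollary~\ref{corrollary: f and g inverse ea other} immediately.

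What the paper's injectivity step buys is mostly heuristic. It actually solves $f(A,B)=(x,y)$ for $(A,B)$, so it shows where the formula for $g$ comes from. Your verification takes the candidate inverse as given, which is legitimate since $g$ is part of the statement.

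One small caveat on your closing remark: the multivalued sign only matters at $x=0$. Your steps go through essentially unchanged with $B=0$ and $y=0$ allowed. So those two exclusions in the hypotheses are not actually needed for bijectivity.
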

\begin{proof}
    The proof is accomplished by the following steps, required to be proven:
    \begin{enumerate}
        \item $f$ is \emph{injective};
        \item $f$ is \emph{surjective};
        \item $f$ is \emph{injective} and \emph{surjective} $\Rightarrow$ $f$ is \emph{bijective};
        \item $g$ is \emph{injective};
        \item $g$ is \emph{surjective};
        \item $g$ is \emph{injective} and \emph{surjective} $\Rightarrow$ $g$ is \emph{bijective};
        \item $f$ and $g$ are \emph{bijective}.
    \end{enumerate}
    Steps 1 and 2 will require computational work, while step 3 will be immediately true from steps 1 and 2. Steps 4 and 5 will require computational work, while step 6 will be immediately true from steps 4 and 5. Step 7 will be immediately true from steps 3 and 6. 
    
    \noindent\underline{\emph{Step 1}}: Suppose that $f(A,B)=f(a,b)=(x,y)$ for some $x,y,A,a\neq0$ and $B,b\in(-1,1)\backslash\{0\}$. Therefore
    \[(1-|B|)A=(1-|b|)a=x,BA=ba=y \]
    Since all factor are non-zero we can write $a=\frac{BA}{b}$and $b=\frac{BA}{a}$. Using the property $r=|r|\mathrm{sign}(r)$, an implicit solution for $a$ in terms  of $x,y,A,B$ and $\mathrm{sign}(a)$ after plugging $b=\frac{BA}{a}$ is 
    \[
    a=A+|BA|(\mathrm{sign}(a)-\mathrm{sign}(A))=x+|y|\mathrm{sign}(a),
    \]
    and an implicit solution for $b$ in terms  of $x,y,A,B$ and $\mathrm{sign}(b)$ after plugging $a=\frac{BA}{b}$ is
    \[
    b=\left( \frac{1}{1+[\mathrm{sign}(b)-\mathrm{sign}(B)]}\right)B=\frac{y}{x+y\mathrm{sign}(b)}.
    \]
    Since we assume that $B,b\in(-1,1)\backslash\{0\}$, we have
    the following signum equalities
    \begin{align*}
     \mathrm{sign}(a)&=\mathrm{sign}(A)=\mathrm{sign}(x)\\
     \mathrm{sign}(b)\mathrm{sign}(a)&=\mathrm{sign}(B)\mathrm{sign}(A)=\mathrm{sign}(y)\\
     \mathrm{sign}(b)&=\mathrm{sign}(B)=\mathrm{sign}(y)\mathrm{sign}(x),\\
    \end{align*}
 and more importantly we have $\mathrm{sign}(a)-\mathrm{sign}(A)=\mathrm{sign}(b)-\mathrm{sign}(B)=0$, which implies (after factoring $\mathrm{sign}(x)$)
 \begin{align}
     a&=A=x+|y|\mathrm{sign}(x)=\mathrm{sign}(x)(|x|+|y|)\\
     b&=B=\frac{y}{x+y\mathrm{sign}(y)\mathrm{sign}(x)}=\mathrm{sign}(x)\frac{y}{|x|+|y|}.
 \end{align}
We have shown that if $f(A,B)=f(a,b)$ then $(A,B)=(a,b)$ and thus $f$ is \emph{injective} (step 1 proven). 

\noindent\underline{\emph{Step 2}}: We have $(A,B)=(a,b)=g(x,y)$. Denote $a=a(x,y)$ and $b=b(x,y)$ as the above solution for $(a,b)$ in terms of $(x,y)$, and observe that $a(x,y)\in\mathbb{R}\backslash\{0\}$ and $b(x,y)\in(-1,1))\backslash\{0\}$. For all $(x,y)$ such that $x,y\neq0$ the pair $(a(x,y),b(x,y))$ satisfies the surjectivity of $f$ because
\begin{equation}
\begin{split}
 f(A,B)&=f(a(x,y),b(x,y))\\
 &=((1-|b(x,y)|)a(x,y),b(x,y)a(x,y))\\
 &=\left(\left(\mathrm{sign}(x)-\frac{|y|\mathrm{sign}(x)}{|x|+|y|}\right)(|x|+|y|),y[\mathrm{sign}(x)]^2\frac{|x|+|y|}{|x|+|y|}\right)\\
 &=(x,y).
\end{split}
\end{equation}
Therefore we have proved that for all $(x,y)\in\mathbb{R}\backslash\{0\}\times\mathbb{R}\backslash\{0\}$ there is $(A,B)\in\mathbb{R}\backslash\{0\}\times(-1,1)\backslash\{0\}$ such that $f(A,B)=(x,y)$, and thus $f$ is \emph{surjective} (step 2 proven).

\noindent\underline{\emph{Step 3}}: $f$ is \emph{injective} (from step 1) and \emph{surjective} (from step 2), and thus $f$ is \emph{bijective} (step 3 proven).

Now we will sketch a proof that $g$ is bijective (it is the reverse calculations for the proof of $f$ to be bijective).

\noindent\underline{\emph{Step 4}}: Suppose that $g(x,y)=g(X,Y)=(A,B)$ for some $x,y,X,Y,A\neq0$ and $B\in(-1,1)\backslash\{0\}$. Therefore
    \[\mathrm{sign}(X)(|X|+|Y|)=\mathrm{sign}(x)(|x|+|y|)=A,\]
    \[\mathrm{sign}(X)\frac{Y}{|X|+|Y|}=\mathrm{sign}(x)\frac{y}{|x|+|y|}=B \]
We can express implicitly $(X,Y)$ as follows
\begin{align*}
    X&=x+[\mathrm{sign}(x)-\mathrm{sign}(X)]|BA|=A-|BA|\mathrm{sign}(X)\\
    Y&=y=BA.
\end{align*}
For similar reasons from previous part of the proof, we have the signum equality $\mathrm{sign}(x)=\mathrm{sign}(X)=\mathrm{sign}(A)$, thus we have an explicit solution for $(X,Y)$
\begin{align}
    X&=x=(1-|B|)A\\
    Y&=y=BA.
\end{align}
$g$ is \emph{injective} because $g(x,y)=g(X,Y)$ implies $(x,y)=(X,Y)$ (step 4 proven).

\noindent\underline{\emph{Step 5}}: We also have $(x,y)=(X(A,B),Y(A,B))=f(A,B)$, where $(X(A,B),Y(A,B))$ is the solution of $(X,Y)$ in terms of $(A,B)\in\mathbb{R}\backslash\{0\}\times(-1,1)\backslash\{0\}$, which satisfies the surjectivity condition.

\begin{equation}
\begin{split}
    g(x,y)&=g((X(A,B),Y(A,B))\\
    &=g((1-|B|)A,BA)|_{\mathrm{sign}((1-|B|)A)=\mathrm{sign}(A)}\\
    &=(\mathrm{sign}(A)(|(1-|B|)A|+|BA|),\mathrm{sign}(A)\frac{BA}{|(1-|B|)A|+|BA|})\\
    &=(|A|\mathrm{sign}(A)(\left|1-|B|\right|+|B|),B\frac{\mathrm{sign}(A)A/|A|}{\left|1-|B|\right|+|B|})|_{\left|1-|B|\right|=1-|B|}\\
    &=(A,B).
\end{split}    
\end{equation}
For all $(A,B)\in\mathbb{R}\backslash\{0\}\times(-1,1)\backslash\{0\}$, there is $(x,y)\in\mathbb{R}\backslash\{0\} \times\mathbb{R}\backslash\{0\}$ such that $g(x,y)=(A,B)$, and thus $g$ is \emph{surjective} (step 5 proven).

\noindent\underline{\emph{Step 6+7}}: Since $g$ \emph{injective} (from step 4) and \emph{surjective} (from step 5), $g$ is  \emph{bijective} (step 6 proven). Therefore, $f$ (from step 3) and $g$ (from step 6) are  \emph{bijective} (step 7 proven).
\end{proof}

\begin{Corollary}\label{corrollary: f and g inverse ea other} 
 $f(A,B)$  and $g(x,y)$ are inverse of each other, for all $(x,y)\in(x,y)\in\mathbb{R}\backslash\{0\} \times\mathbb{R}\backslash\{0\}$ and  $(A,B)\in\mathbb{R}\backslash\{0\}\times(-1,1)\backslash\{0\}$.   
\end{Corollary}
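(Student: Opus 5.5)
To prove Corollary \ref{corrollary: f and g inverse ea other}, I would show directly that $g\circ f$ is the identity on the $(A,B)$-domain and $f\circ g$ is the identity on the $(x,y)$-domain. The bijectivity from Theorem \ref{theorem: SN-SP bijective} would then serve only as a consistency check, because a bijection has a unique two-sided inverse. Concretely, let $D_1=\mathbb{R}\backslash\{0\}\times(-1,1)\backslash\{0\}$ and $D_2=\mathbb{R}\backslash\{0\}\times\mathbb{R}\backslash\{0\}$.

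\textbf{Step 1: $f$ maps $D_1$ into $D_2$ and $g$ maps $D_2$ into $D_1$.} For $(A,B)\in D_1$ we have $1-|B|>0$, so $x=(1-|B|)A\neq0$ and $\mathrm{sign}(x)=\mathrm{sign}(A)$. We also have $y=BA\neq0$. Conversely, for $(x,y)\in D_2$ the first component $\mathrm{sign}(x)(|x|+|y|)$ is nonzero. The second component has absolute value $|y|/(|x|+|y|)$, which lies strictly between $0$ and $1$ because $x,y\neq0$. So $g(x,y)\in D_1$.

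\textbf{Step 2: $g(f(A,B))=(A,B)$.} I substitute $x=(1-|B|)A$ and $y=BA$ and use $\mathrm{sign}(x)=\mathrm{sign}(A)$ together with $|x|+|y|=|A|(1-|B|+|B|)=|A|$. This gives
\[
g(f(A,B))=\Bigl(\mathrm{sign}(A)|A|,\ \mathrm{sign}(A)\frac{BA}{|A|}\Bigr)=(A,B),
\]
since $\mathrm{sign}(A)A/|A|=1$. This is essentially the computation in Step 5 of the proof of Theorem \ref{theorem: SN-SP bijective}, which I would simply cite.

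\textbf{Step 3: $f(g(x,y))=(x,y)$.} With $A=\mathrm{sign}(x)(|x|+|y|)$ and $B=\mathrm{sign}(x)\,y/(|x|+|y|)$, we get $|B|=|y|/(|x|+|y|)$ and hence $1-|B|=|x|/(|x|+|y|)$. Then $(1-|B|)A=\mathrm{sign}(x)|x|=x$, and $BA=\mathrm{sign}(x)^2y=y$. This matches Step 2 of the earlier proof.

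The only delicate point is the sign bookkeeping: the identities $\mathrm{sign}(x)=\mathrm{sign}(A)$ and $|1-|B||=1-|B|$ are valid precisely because of the excluded values $B=\pm1$ and $x,y,A,B\neq0$. For this reason I would state the domain restrictions from Step 1 explicitly before carrying out either composition. The multivalued sign convention at $0$ never enters, because $0$ is excluded from both domains.
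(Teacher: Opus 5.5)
Your proposal is correct and takes the same route as the paper: its proof of the corollary simply points back to the two composition identities $f\circ g(x,y)=(x,y)$ and $g\circ f(A,B)=(A,B)$, computed in Steps 2 and 5 of the proof of Theorem~\ref{theorem: SN-SP bijective}, and these are exactly your Steps 3 and 2. Your explicit check that $f$ maps $D_1$ into $D_2$ and $g$ maps $D_2$ into $D_1$ is only implicit in the paper, and stating it makes your argument self-contained.
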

\begin{proof}
Through the above proof of Theorem \ref{theorem: SN-SP bijective}, we have shown that if $(x,y)\in(x,y)\in\mathbb{R}\backslash\{0\} \times\mathbb{R}\backslash\{0\}$ and  $(A,B)\in\mathbb{R}\backslash\{0\}\times(-1,1)\backslash\{0\}$, then
\begin{align}
    f\circ g(x,y)&=(x,y),\\
    g\circ f(A,B)&=(A,B).
\end{align}
Thus $f(A,B)$  and $g(x,y)$ are inverse of each other.
\end{proof}

Now we will explore the relations between Eqs. \eqref{eq: AB to xy}-\eqref{eq: xy to AB} when $x,y,A=0;B=0,\pm1$, as an extension to  Theorem \ref{theorem: SN-SP bijective}.
\begin{Definition}
    Suppose that $(x,y)\in \mathbb{R}\times \mathbb{R}$ and $(A,B)\in \mathbb{R}\times [-1,1]$.
    \begin{itemize}
        
        \item \textbf{Relation 1:} $(x,y)$ is related to $(A,B)$ via Eq. \eqref{eq: AB to xy} if $(x,y)$ and $(A,B)$ satisfy  Eq. \eqref{eq: AB to xy} i.e.,
        \[(x,y)=((1-|B|)A, BA) \]
        or equivalently $f(A,B)=(x,y)$ (Eq. \eqref{eq: f(A,B)});
        
        \item \textbf{Relation 2:} $(x,y)$ is related to $(A,B)$ via Eq. \eqref{eq: xy to AB} if $(x,y)$ and $(A,B)$ satisfy  Eq. \eqref{eq: xy to AB} i.e.,
        \[(A,B)=(\mathrm{sign}(x)(|x|+|y|),\mathrm{sign}(x)\frac{y}{|x|+|y|}) \]
        or equivalently $g(x,y)=(A,B)$ (Eq. \eqref{eq: g(x,y)})
        
    \end{itemize}
\end{Definition}
Relation 1 extends the bijectivity of $f$, and relation 2 extends the bijectivity of $g$, when $x,y,A\neq0;B\neq0,\pm1$.
We will explore each relation when $x,y,A=0;B=0,\pm1$, starting with relation 1
\begin{enumerate}
    
    \item if $x=0$,then
    \begin{itemize}
        \item $A=0\Rightarrow y=0,B=free$; or
        \item $B=\pm 1\Rightarrow y=\pm A,A=free$
    \end{itemize}

    \item if $y=0$,then
    \begin{itemize}
        \item $A=0\Rightarrow x=0,B=free$; or
        \item $B=0\Rightarrow x=A ,A=free$;
    \end{itemize}

    \item if $A=0$, then $x=y=0,B=free$;

    \item if $B=0$, then $x=A,y=0,A=free$;

    \item if $B=\pm 1$, then $x=0,y=\pm A,A=free$.
    
\end{enumerate}

For summary, we organize the above cases in tuples of order  pairs $((x,y),(A,B))$ that satisfies relation 1:
\begin{gather*}
    ((0,0),(0,b))\\
    ((0,0),(a,0))\\
    ((a,0),(a,0))\\
    ((0,ac),(a,c))
\end{gather*}
for all $a,b,c$ such that $a\in\mathbb{R}$, $b\in\mathbb[-1,1]$ and $c\in\{-1,1\}$. Observe that if $a\neq0$, then $f$ is not well defined function because the point $(A,B)=(a,0)$ in the domain of $f$ is mapped to two distinct points in the co-domain $(x,y)=(a,0)$ and $(x,y)=(0,0)$. If we exclude the order pair $((a,0),(a,0))$, then we have a well defined function $f$, but it is not injective because $f(a,0)=f(0,b)=(0,0)$ but $(a,0)=(0,b)$ for some $a,b\neq0$.

Similarly we will explore relation 2 when $x,y,A=0;B=0,\pm1$, noticing that $x$ and $y$ cannot simultaneously be zero
Since $B$ has an denominator $|x|+|y|$, we cannot have $|x|+|y|=0$:
\begin{enumerate}
    
    \item if $x=0$,then $A=B=0, y=free\neq0$; 
    
    \item if $y=0$,then $B=0,A=x,x=free\neq0$;

    \item if $A=0$,then $\mathrm{sign}(x)=0\Rightarrow x=0\Rightarrow B=0, y=free\neq0$;

    \item if $B=0$,then
    \begin{itemize}
        \item $\mathrm{sign}(x)=0\Rightarrow x=0\Rightarrow A=0, y=free\neq0$; or
        \item $y=0\Rightarrow A=x, x=free\neq0$;
    \end{itemize}
    \item if $B=\pm 1$,then
    \begin{align*}
        1&=|B|\\
        &=|\mathrm{sign}(x)|\frac{|y|}{|x|+|y|} 
    \end{align*}
    which is a contradiction, because $|\mathrm{sign}(x)|\neq 0$ (otherwise we have $1=0$), so we must have $|\mathrm{sign}(x)|=1$. Therefore $|x|>0$ and $1=\frac{|y|}{|x|+|y|}<1$. Therefore $B=\pm1$ is impossible.
    
\end{enumerate}

For summary, we organize the above cases in tuples of order  pairs $((x,y),(A,B))$ that satisfies relation 2:
\begin{gather*}
    ((0,r),(0,0))\\
    ((r,0),(r,0))\\
\end{gather*}
for all $r\neq0$. Observe that there is no solution to $g(x,y)=(A,\pm1)$, therefore $g$ is not surjective when including $B=\pm1$ in the co-domain.

\section{The Connection to the Möbius-Strip}\label{sec: The Connection to the Möbius-Strip}

A simple observation of the Soft Numbers coordinate system reveals a connection to the Möbius-strip. The two real axes are reciprocals of one another, and gluing the right boundary to the left boundary appears to generate an infinite Möbius-strip. However, this construction faces a twofold difficulty: first, an infinite strip cannot be realized in the physical world; second, the strip is too tall relative to its width. In this section we resolve the problem differently, and show how Soft Numbers can be linked to a geometric representation of the Möbius-strip. Firstly, we show the  equations, and then we show simple illustrations in Python to generate the SNS, the soft number's cartesian plane and the Möbius-strip.

\subsection{Möbius-Strip Equations with SNS}
For illustration purposes, we assume that the signed height $A$ is finite i.e., 
\begin{equation}
  |A| \le A_\textrm{max}  
\end{equation}
for some $A_\textrm{max}>1$ large enough. Observe that, the total height of the bounded SNS is 2$A_\textrm{max}$. If we take
\begin{equation}
  A_\textrm{max}=\pi R  
\end{equation}
for some $R>1$ large enough, then the parameter $R$ is the radius of the bending of the SNS (with circle's perimeter 2$\pi R$) into a Möbius-strip, and $A$ can be presented as a (signed) arc-length of the bent SNS so that 
\begin{equation} \label{eq: A dom for mobius}
\begin{split}
    A&=\phi R\in [-\pi R,\pi R],\\
    \phi&=\frac{A}{R}\in[-\pi,\pi].
\end{split}    
\end{equation}
In this setup, the arc-length $A$ and the bending angle $\phi$ are equivalent, up-to a constant multiple of the radius $R$ (in this work, we keep the radius $R$ constant). For Möbius-strip construction, the first variable is the arc-length $A$ (or equivalently the bending angle $\phi$). The other variable is the width of the Möbius-strip, that we define it from the SNS as
\begin{equation} \label{eq: B dom for mobius}
    B\in [-1,1].
\end{equation}

The soft number in the Cartesian plane is
\begin{equation}\label{eq: xy dom for mobius}
\begin{split}
  &C=x\bar{0}\softplus y,\\
  &x=(1-|B|)A,\\
  &y=BA,\\
  &|x|+|y| \le \pi R,
\end{split}    
\end{equation}
where the region in the last inequality is taken subject to the boundedness of $A$ being between $-\pi R$ and and $\pi R$ (cf. Eqs \eqref{eq: xy to AB} and \eqref{eq: A dom for mobius}). We denote by $X$, $Y$, $Z$ the cartesian coordinates of the Möbius-strip as follows:
\begin{equation}\label{eq: mobius strip eq}
\begin{split}
  &X=\left(R+B\cos\left(\frac{\phi}{2}\right)\right)\cos\left(\phi\right),\\
  &Y=\left(R+B\cos\left(\frac{\phi}{2}\right)\right)\sin\left(\phi\right),\\
  &Z=B\sin\left(\frac{\phi}{2}\right).\\
\end{split}    
\end{equation}
We use an RGB color model to color-code the domain of the SNS variables $A$ and $b$ as follows:

\begin{equation} \label{eq: AB color code}
C_\textrm{color} = \begin{cases}
1  \text{ (red)}& \text{if } A<0  \text{ and } B>0\\
0.7\text{ (yellow)}&  \text{if } A>0  \text{ and } B>0\\
0.5 \text{ (green)}&  \text{if } A<0  \text{ and } B<0\\
0 \text{ (blue)}  & \text{otherwise}
\end{cases}.
\end{equation}
With this RGB color coding, we will demonstrate how the SNS is mapped to a soft number cartesian plane and the Möbius-strip.

\subsection{Illustrations of Soft Numbers Cartesian Plane, SNS and Möbius-Strip}
The following Python code illustrates the soft numbers cartesian plane, the SNS and the Möbius-Strip, as we defined above for radius parameter $R=10$.

\begin{lstlisting}[language=Python, caption=Soft Numbers Cartesian Plane and SNS and Möbius-Strip in Python ]
import numpy as np
import matplotlib.pyplot as plt
from matplotlib import cm
plt.rc('text.latex', preamble=r'\usepackage{amsmath,amssymb,amsfonts}')
plt.rcParams['mathtext.fontset'] = 'stix'
plt.rcParams['font.family'] = 'STIXGeneral'

# SNS gererating
R = 10.0 #radius param
phi = np.linspace(-np.pi,np.pi,1000) #Angle var
B = np.linspace(-1,1,1000) #SNS width var
B, phi = np.meshgrid(B, phi)
A = phi*R #SNS height var
C = 0.7*(A>0)*(B>0)+0.0*(A>0)*(B<0)+0.5*(A<0)*(B<0)+1.0*(A<0)*(B>0) #color coding

# soft number components generating
x = (1-np.abs(B))*A #soft zero part
y = B*A #real part

# Mobius Strip vars generating
X = (R + B * np.cos(phi / 2)) * np.cos(phi)
Y = (R + B * np.cos(phi / 2)) * np.sin(phi)
Z = B * np.sin(phi / 2)

# plotting
C_norm = (C - C.min()) / (C.max() - C.min())
colors = cm.jet(C_norm)
## Mobius strip 
fig = plt.figure()
ax = fig.add_subplot(111, projection='3d')
p=ax.plot_surface(X, Y, Z, cmap='jet', facecolors=colors, shade=False)
ax.set_xlabel(r'$X$')
ax.set_ylabel(r'$Y$')
ax.set_zlabel(r'$Z$')
ax.set_box_aspect(None,zoom=0.85)
ax.axis('equal')
ax.set_title('Mobius Strip')
fig.colorbar(p,cax=None, ax=ax,label='Color Code')
## SNS
fig = plt.figure()
ax = fig.add_subplot(111, projection='3d')
p=ax.plot_surface(B, A, C, cmap='jet', facecolors=colors, shade=False)
ax.set_xlabel(r'$B$')
ax.set_ylabel(r'$A$')
ax.view_init(elev=90, azim=270)
ax.set_xlim([-1.5,1.5])
ax.set_zticks([])
ax.set_title('Soft Numbers Strip')
## soft number Cartesian Plane
fig = plt.figure()
ax = fig.add_subplot(111, projection='3d')
p=ax.plot_surface(x, y, C, cmap='jet', facecolors=colors, shade=False)
ax.set_xlabel(r'$x[\overline{0}]$')
ax.set_ylabel(r'$y$')
ax.view_init(elev=90, azim=270)
ax.set_zticks([])
ax.set_box_aspect(None,zoom=0.85)
ax.set_title('Soft Numbers Cartesian Plane')
\end{lstlisting}

\begin{figure}[ht]
    \centering
    \subfloat[\centering]{\includegraphics[width=5.0cm]{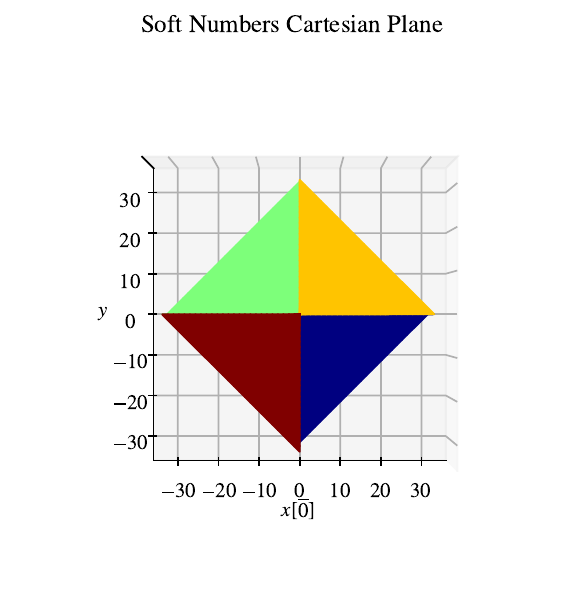}\label{fig: Soft Möbius Map (a)}}
    \subfloat[\centering]{\includegraphics[width=5.0cm]{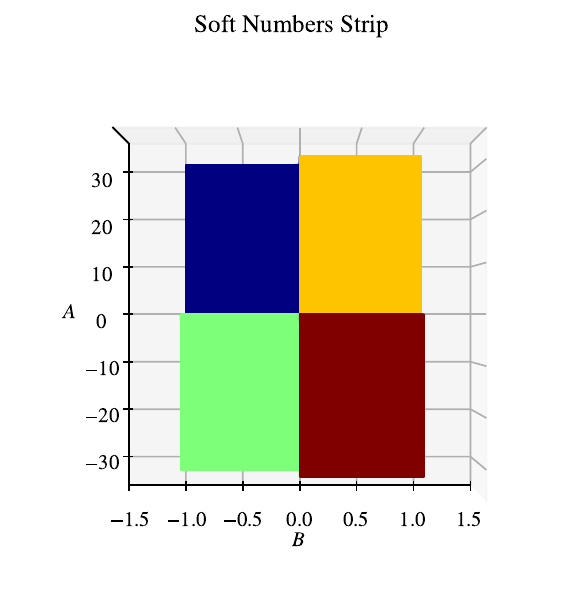}\label{fig: Soft Möbius Map (b)}}
    \subfloat[\centering]{\includegraphics[width=6.0cm]{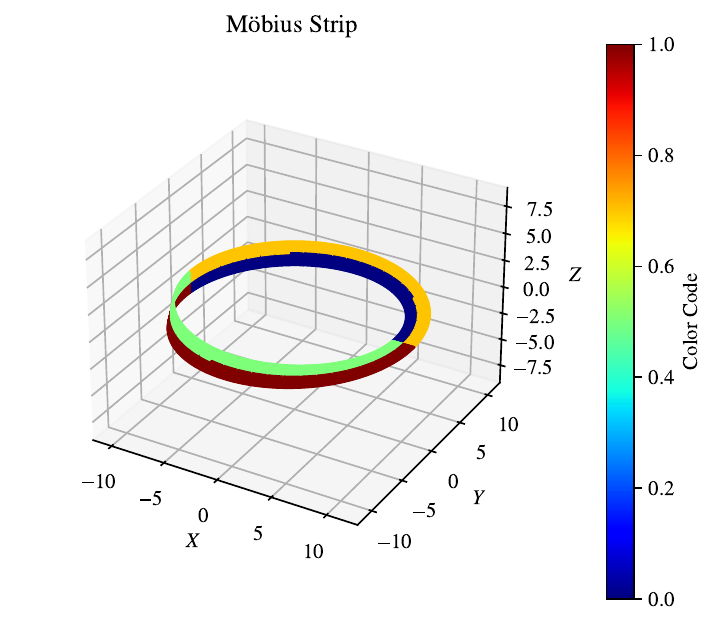}\label{fig: Soft Möbius Map (c)}}
    \caption{The Soft Möbius Map (with common color code bar): (a) Cartesian square domain of the soft numbers (colored); (b) Complete Soft coordinate system (colored); (c) Möbius-strip mapped by the soft numbers.}
    \label{fig: Soft Möbius Map}
\end{figure}

The illustrated results are shown in Figure \ref{fig: Soft Möbius Map}. We started with the SNS in Figure \ref{fig: Soft Möbius Map (b)}. The domain of the SNS is all $A,B$ such that $A\in[-10\pi,10\pi]$ (note: $10\pi\approx31.4$), and $B\in[-1,1]$. The SNS domain is divided into 4 colored rectangles according to Eq. \eqref{eq: AB color code}. Each colored rectangle in the SNS is, one the one hand, transformed into a triangle in the soft number cartesian plane in Figure \ref{fig: Soft Möbius Map (a)} (cf. Eq. \eqref{eq: xy dom for mobius}), and, on the other hand, transformed according to Eq. \eqref{eq: mobius strip eq} to generate the Möbius-strip shown in Figure \ref{fig: Soft Möbius Map (c)}.

We have demonstrated the connection between Soft Numbers and the Möbius-strip. In the context of a geometric understanding of the universe and physical phenomena, it is important to develop a mathematical language that incorporates the observer as an integral part of the world of phenomena. The Möbius-strip model is a geometric illustration of the idea that the observer is an organic part of the world itself, and in this way allows for a deeper understanding of the nature and character of Hilbert's sixth problem.

\section{The Probability of the Impossible}

Classical probability theory focuses primarily on computing the probability of possible events. In this paper, we show that probabilistic distinctions can be made even for impossible events.

\begin{Remark}
    A probability of a discrete random variable $X$ to have a value $x$ in the support $\Omega_X$ is denoted by
    \begin{equation}
    \textrm{Pr}(X=x)=p_X(x),
    \end{equation}
    where the RHS is denoted as the probability mass function (PMF) of the discrete random variable $X$. In the ordinary probability theory, if the value $x$ is outside of the support $\Omega_X$, then that probability is zero. In this paper, we extend those zero values into soft zeros.      
\end{Remark}

\subsection{Motivation} \label{ssec: impossible prob motivation}
Suppose we toss 6 fair coins, each with a 50\% chance of heads and 50\% chance of tails. The probability that the number of tails equals 4 is computed using the combinations formula. Let $X$ be the number of tails i.e., $X \sim  \text{Bin}\left(n=6,p=\frac{1}{2}\right)$. Recall that for general parameters $n\in\mathbb{N}$ and $p\in[0,1]$:
\begin{equation}
    \textrm{Pr}(X=k)=\binom{n}{k}p^k(1-p)^{n-k},
\end{equation}
for $k\in{0,1,2,...,n}$ as the number of tails. Take a "valid" numerical example for $k=4$ tails:
\begin{equation*}
    \textrm{Pr}(X=4)=\binom{6}{4}\cdot\frac{1}{2^6} = \frac{15}{64}.
\end{equation*}
The general calculation of this problem can be described using Pascal’s triangle (Figure \ref{fig: Pascal triangle}).
\begin{figure}[ht]
\centering
\begin{tikzpicture}
    \foreach \n in {0,...,5} {
        \foreach \k in {0,...,\n} {
            \node at (\k-\n/2, -\n) {
                \pgfmathparse{int(factorial(\n)/(factorial(\k)*factorial(\n-\k)))}
                \pgfmathresult
            };
        }
    }
\end{tikzpicture}
\caption{Pascal triangle.}
\label{fig: Pascal triangle}
\end{figure}

We now ask: what is the probability of getting tails 7 times when tossing 6 coins? Clearly the answer should be 0. Moreover, the mathematical expression for it is considered undefined:
\begin{equation*}
    \textrm{Pr}(X=7)=\binom{6}{7}\cdot\frac{1}{2^6}.
\end{equation*}
The probability of getting tails 8 times is also zero — but is it \textbf{exactly} the same probability as getting tails 7 times? It is commonly assumed, but we propose that there are different degrees of impossibility. To this end, we show that Pascal’s triangle can be extended to the right using soft numbers, allowing us to distinguish between the probabilities of impossible events.

\subsection{Factorial and Pascal triangle Extension by Soft Numbers}
The pascal triangle and the factorial were extended with soft numbers in \cite{Klein_Maimon2023}.

\subsubsection{Factorial Extension by Soft Numbers}
The factorial is extended with soft numbers, based on the observations that $0!=1$ and the factorial's recursion $(n-1)!=\frac{n!}{n}$ for $n\in\mathbb{N}$. This recursion was extended to non-positive integers as follows by replacing $n-1$ with $-m$ for $m\in\mathbb{N}$:
\begin{subequations}\label{eq: soft factorial}
\begin{align}
   (-m)!&=\frac{(-(m-1)!)}{-(m-1)}=(-1)^{m-1}\frac{1}{(m-1)!\bar{0}}\label{eq: soft factorial formula},\\
   (-1)!&=(-1)^0\frac{1}{(0)!\bar{0}}=\frac{1}{\bar{0}},\label{eq: soft factorial init.}    
\end{align}
\end{subequations}
where the middle quantity in Eq. \eqref{eq: soft factorial formula} is the recursion of the LHS, the RHS of Eq. \eqref{eq: soft factorial formula} is the direct formula of the LHS based on the regular factorial in the denominator, and Eq. \eqref{eq: soft factorial init.} is the initial condition of the recursion.

\subsubsection{Binomial Coefficient Extension by Soft Numbers}
The binomial coefficient is extended based on the extension of the factorial and the ordinary definition of the binomial coefficient:
\begin{equation}\label{eq: bin coeff soft k>n>=0}
    \binom{n}{k}=\frac{n!}{k!(n-k)!}=(-1)^{k-n-1}\frac{n!(k-n-1)!}{k!}\bar{0},
\end{equation}
The RHS is obtained if $k>n\ge 0$.
\begin{Remark}
    The reduction formula of the binomial coefficient 
    \begin{equation}
     \binom{n+1}{k+1}=\binom{n}{k}+\binom{n}{k+1}   
    \end{equation}
    holds for $k>n\ge 0$ (in addition to $n>k\ge 0$). See the theorem and the proof in \cite[p. 131]{Klein_Maimon2023}.   
\end{Remark}
\begin{Theorem}
Suppose $n,m\in\mathbb{N}$. Then
\begin{equation}\label{eq: bin coeff soft n,m>=0 thm}
    \binom{n}{-m}=\binom{n}{n+m}
\end{equation}
i.e., the extension of the lower position $-m$ is well defined for negative number. 
\end{Theorem}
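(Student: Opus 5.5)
Both sides can be computed explicitly from the definition $\binom{n}{k}=\frac{n!}{k!(n-k)!}$, with the soft factorial of Eq.~\eqref{eq: soft factorial} used for every negative argument. The left side is a quotient involving $(-m)!$ in the denominator. The right side $\binom{n}{n+m}$ has lower index $k=n+m>n\ge 0$, so Eq.~\eqref{eq: bin coeff soft k>n>=0} applies to it directly. Once both are written in the form (real coefficient)$\cdot\bar{0}$, the proof reduces to comparing the two real coefficients.

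\textbf{Left side.} We have
\[
\binom{n}{-m}=\frac{n!}{(-m)!\,(n+m)!}.
\]
By Eq.~\eqref{eq: soft factorial formula}, $(-m)!=(-1)^{m-1}\frac{1}{(m-1)!\,\bar{0}}$. Dividing by it amounts to multiplying by $(-1)^{m-1}(m-1)!\,\bar{0}$, using $1/(-1)^{m-1}=(-1)^{m-1}$. The paper's own Eq.~\eqref{eq: bin coeff soft k>n>=0} uses the same convention: a factor $\frac{1}{\bar{0}}$ in the denominator becomes a factor $\bar{0}$ in the numerator. This gives
\[
\binom{n}{-m}=(-1)^{m-1}\frac{n!\,(m-1)!}{(n+m)!}\,\bar{0},
\]
where we use the scalar-multiplication proposition $b(a\bar{0})=(ab)\bar{0}$ to collect the real factors.

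\textbf{Right side.} Substituting $k=n+m$ into Eq.~\eqref{eq: bin coeff soft k>n>=0} gives exponent $k-n-1=m-1$ and factorial $(k-n-1)!=(m-1)!$. Hence
\[
\binom{n}{n+m}=(-1)^{m-1}\frac{n!\,(m-1)!}{(n+m)!}\,\bar{0}.
\]
The two real coefficients agree. Axiom~\ref{axiom: SN Distinction}, read in the direction that equal coefficients give equal soft zeros, then yields equality of the soft numbers.

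\textbf{Main obstacle.} The delicate step is justifying the manipulation of $\frac{1}{\bar{0}}$: the reciprocal formula for soft numbers requires a nonzero real part $b\neq0$, so it does not cover $\bar{0}$. I would rely on the same formal convention that produced Eq.~\eqref{eq: bin coeff soft k>n>=0}, namely $\frac{1}{(1/c\bar{0})}=c\bar{0}$ for real $c\neq0$, and state it explicitly as the convention in force.

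There is also an edge case if $0\in\mathbb{N}$ is intended. For $m=0$, the identity $\binom{n}{0}=\binom{n}{n}=1$ is classical and does not need soft numbers. I would note this and treat $m\ge1$ as the substantive case.
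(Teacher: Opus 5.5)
Your proof is correct, but it takes a different and longer route than the paper. The paper never evaluates $(-m)!$. It substitutes into $\binom{n}{k}=\frac{n!}{k!(n-k)!}$ on both sides and observes that $\binom{n}{-m}=\frac{n!}{(-m)!\,(n+m)!}$ and $\binom{n}{n+m}=\frac{n!}{(n+m)!\,(-m)!}$ are the same expression with the two denominator factors interchanged. That argument does not depend on the value the soft factorial assigns to $(-m)!$. It needs no rule for dividing by $\frac{1}{\bar{0}}$, only that the product in the denominator commutes.

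You instead compute both sides explicitly as the soft zero $(-1)^{m-1}\frac{n!\,(m-1)!}{(n+m)!}\bar{0}$. This requires the formal rule $1/\bigl(\tfrac{1}{c\bar{0}}\bigr)=c\bar{0}$, which you rightly flag and which the paper itself uses tacitly in Eq.~\eqref{eq: bin coeff soft k>n>=0}. It is also somewhat redundant, since Eq.~\eqref{eq: bin coeff soft k>n>=0} at $k=n+m$ comes from exactly the computation you do for the left side. What your route buys is that it proves the paper's subsequent Corollary at the same time, rather than deriving it afterwards from the Theorem.

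Two small remarks:
\begin{itemize}
\item The final step, that equal coefficients give equal soft zeros, is just substitution of equals. It is not Axiom~\ref{axiom: SN Distinction}, which yields the opposite implication $a\bar{0}=b\bar{0}\Rightarrow a=b$.
\item Your separate treatment of $m=0$ is genuinely needed in your approach, because Eq.~\eqref{eq: bin coeff soft k>n>=0} requires $k>n$. The paper's formal argument covers $m=0$ without change.
\end{itemize}
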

\begin{proof}
    The LHS is
    \begin{align*}
     \binom{n}{-m}&=\frac{n!}{(-m)!(n-(-m))!}\\
     &=\frac{n!}{(-m)!(n+m)!}.
    \end{align*}
    The RHS is
    \begin{align*}
     \binom{n}{n+m}&=\frac{n!}{(n+m)!(n-(n+m))!}\\
     &=\frac{n!}{(n+m)!(-m)!}.
    \end{align*}
    We have equality between the LHS and the RHS, and thus the extension of the lower position $-m$ is well defined for negative numbers.
\end{proof}
\begin{Corollary}
Suppose $n,m\in\mathbb{N}$. Then
\begin{equation}\label{eq: bin coeff soft n,m>=0 cor}
    \binom{n}{-m}=(-1)^{m-1}\frac{n!(m-1)!}{n+m!}\bar{0}.
\end{equation}
\end{Corollary}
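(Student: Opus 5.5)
The plan is to combine the preceding theorem, $\binom{n}{-m}=\binom{n}{n+m}$, with the closed form in Eq.~\eqref{eq: bin coeff soft k>n>=0} for binomial coefficients whose lower index exceeds the upper one. Set $k=n+m$. Since $m\in\mathbb{N}$ means $m\ge 1$, we have $k>n\ge 0$, so Eq.~\eqref{eq: bin coeff soft k>n>=0} applies and gives
\[
\binom{n}{n+m}=(-1)^{(n+m)-n-1}\frac{n!\,((n+m)-n-1)!}{(n+m)!}\bar{0}=(-1)^{m-1}\frac{n!\,(m-1)!}{(n+m)!}\bar{0}.
\]
The right-hand side of Eq.~\eqref{eq: bin coeff soft n,m>=0 cor} should be read with the denominator $(n+m)!$.

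As a consistency check, and as an independent route, I would also compute $\binom{n}{-m}$ directly from the factorial extension in Eq.~\eqref{eq: soft factorial}:
\[
\binom{n}{-m}=\frac{n!}{(-m)!\,(n+m)!},\qquad (-m)!=(-1)^{m-1}\frac{1}{(m-1)!\,\bar{0}}.
\]
The key manipulation is inverting $(-m)!$. I would interpret $\frac{1}{1/((m-1)!\bar{0})}$ as $(m-1)!\,\bar{0}$, which is the same convention used to pass from the middle expression to the right-hand side of Eq.~\eqref{eq: bin coeff soft k>n>=0}. Using $(-1)^{-(m-1)}=(-1)^{m-1}$, this gives
\[
\frac{1}{(-m)!}=(-1)^{m-1}(m-1)!\,\bar{0}.
\]
Finally, by the scalar-multiplication rules $b(a\bar{0})=(ab)\bar{0}$ stated earlier, multiplying by the real factor $\frac{n!}{(n+m)!}$ yields exactly the claimed formula.

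The main obstacle is the formal legitimacy of the reciprocal $1/\bar{0}$ and of the identity $1/(1/(c\bar{0}))=c\bar{0}$. The soft-number reciprocal formula requires a nonzero real part, so it does not cover this case. My first choice is therefore to rely on Eq.~\eqref{eq: bin coeff soft k>n>=0} as the already-accepted definition, which keeps the argument a pure substitution $k=n+m$. I would present the direct factorial computation only as a remark confirming agreement with the recursion \eqref{eq: soft factorial}.
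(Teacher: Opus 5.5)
Your proposal is correct and follows the paper's own argument: the paper proves the corollary by substituting $k=n+m$ into the closed form for $\binom{n}{k}$ with $k>n\ge 0$ and combining it with the theorem $\binom{n}{-m}=\binom{n}{n+m}$, and you are right to read the denominator as $(n+m)!$. Your direct factorial computation is a harmless consistency check, since it relies on the same formal handling of $1/\bar{0}$ that the paper already uses.
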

\begin{proof}
    Immediate result using Eq. \eqref{eq: bin coeff soft k>n>=0} (with $k=n+m$) and Eq. \eqref{eq: bin coeff soft n,m>=0 thm}.
\end{proof}

\subsubsection{Illustration of Extended Pascal triangle}
An illustration of the new Pascal triangle, extended by soft numbers is shown in Figure \ref{fig: The extended Pascal triangle}.
\begin{figure}[htb]
    \centering
    \includegraphics[width=0.5\linewidth]{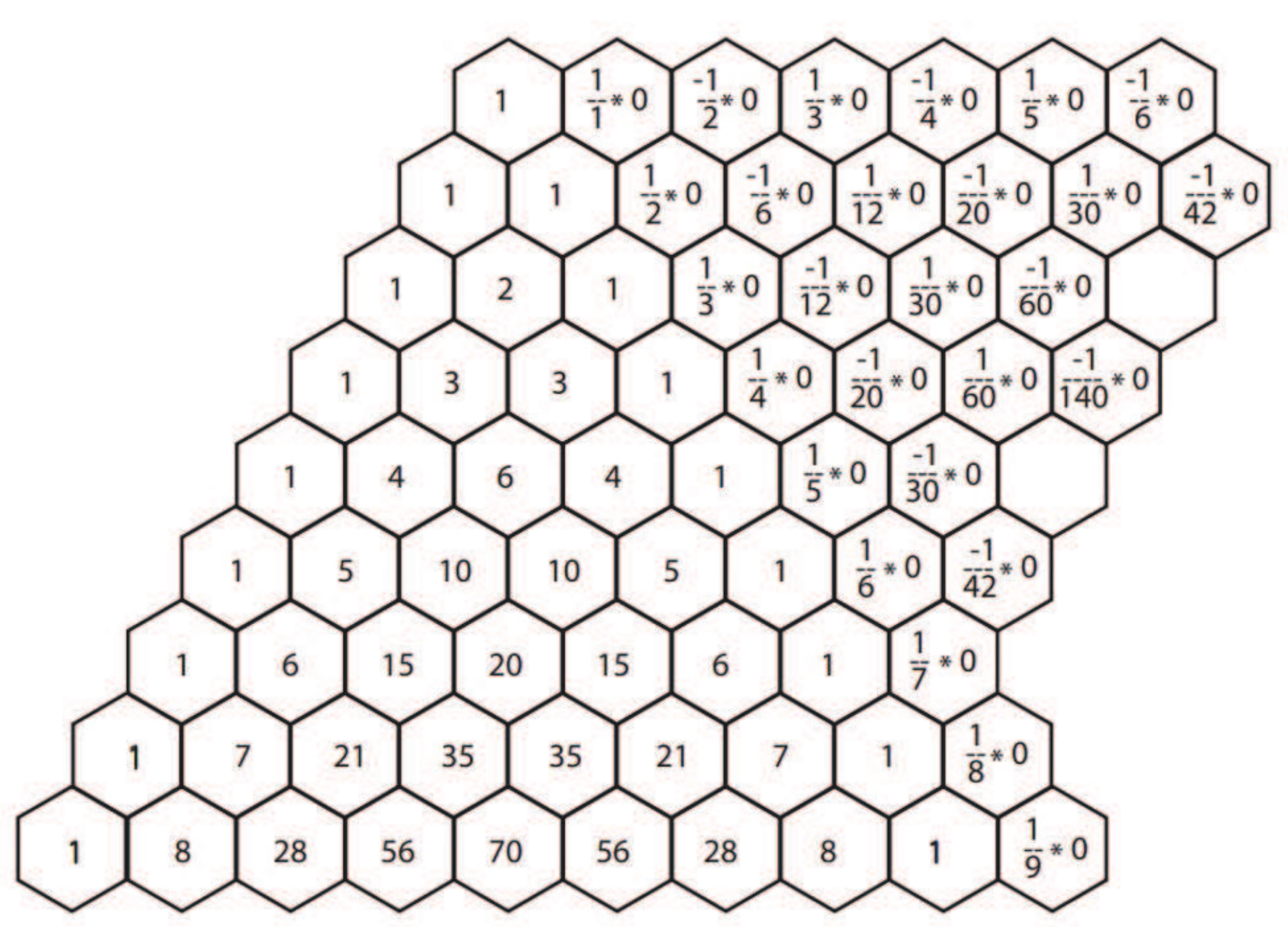}
    \caption{The extended Pascal triangle\cite[Fig. 5]{Klein_Maimon2023}}
    \label{fig: The extended Pascal triangle}
\end{figure}
In this figure, the regular triangle is shown in the middle and padded the real values of the binomial coefficients $\binom{n}{k}$ with $n>k \ge 0$. The extension on the right edge of the Pascal triangle is illustrated in this figure, and the values are corresponded to the binomial coefficients $\binom{n}{k}$ with $k>n \ge 0$. This extension, can be accomplished symmetrically on the left edged for $k<0 \le n$.  

\subsection{Impossible Probability - Definitions and Notations}
In order to distinguish among the notations of probability operators ($\mathrm{Pr}$ and $\mathrm{Ps}$), we denote new symbols for impossible probability in the case of a discrete random variable.

\begin{Definition}\label{def: impossible prob}
    Let $X$ be a discrete random variable with a support $\Omega_X$ and a PMF $p_X(\cdot)$. We denote a new symbol $\mathrm{Pi}$ as the \emph{Impossible Probability} operator for a probability of impossible events, and a new symbol for a $p_{\textrm{i}_X}(\cdot)$ for a soft zero extension of the PMF $p_X(\cdot)$. We define the probability of impossible events for discrete random variables as follows:
    \begin{equation}\label{eq: Pi(X=x)}
    \mathrm{Pi}(X=x) = \begin{cases} 
      p_{\textrm{i}_X}(x), & x \notin \Omega_X \\
      0\bar{0}, &  x \in \Omega_X 
   \end{cases}
    \end{equation}
\end{Definition}
In the context of Definition \ref{def: impossible prob}, $\mathrm{Pi}(X=x)$ is a soft zero, such that the PMF soft zero extension is $p_{\textrm{i}_X}(x)=f(x)\bar{0}$ for some real function $f$. If $x$ is inside the support $\Omega_X$, then $X=x$ is a \emph{possible} event, and thus we map it to $0\bar{0}$ for convenience. Now we denote another symbol that merges the impossible and possible events into one definition.
\begin{Definition}\label{def: impossible possible prob}
    Let $\mathrm{Pip}(\cdot)$ be the \emph{Impossible/Possible Probability} operator. For a discrete random variable $X$, we define $\mathrm{Pip}(\cdot)$ by
    \begin{equation}\label{eq: Pip(X=x)}
       \mathrm{Pip}(X=x)=\mathrm{Pi}(X=x)\softplus\mathrm{Pr}(X=x) 
    \end{equation}
\end{Definition}
In the context of Definition \ref{def: impossible possible prob}, $\mathrm{Pip}(X=x)$ is a soft number that cannot have non-zero components simultaneously in both parts. If $x$ is inside the support $\Omega_X$, then 
\begin{equation*}
    \mathrm{Pip}(X=x)=0\bar{0}\softplus p_X(x),
\end{equation*}
i.e., it has zero in ths soft part and the PMF of $X$ in the real part. If $x$ is outside of the support $\Omega_X$, then 
\begin{equation*}
    p_{\textrm{i}_X}(x)\softplus 0,
\end{equation*}
i.e., it has the extended PMF $p_{\textrm{i}_X}(x)=f(x)\bar{0}$ in the soft part and zero in the real part.

\subsection{Examples}
We show the following examples for demonstrations of the probability of the impossible/possible with the above definitions.
\begin{Example}[Binomial Distribution]
     Now we return to the binomial distribution in subsection \ref{ssec: impossible prob motivation}, $X \sim  \text{Bin}\left(n,p\right)$, and let $k\in \mathbb{N}$. If $n>k\ge0$, then denote $k=m+n$ for 
     \begin{equation*}
          \mathrm{Pip}(X=k)=0\bar{0}\softplus \frac{n!}{k!(n-k)!}p^k(1-p)^{n-k},    
     \end{equation*}
\end{Example}
\noindent where the real part is a result of the ordinary binomial distribution with an ordinary and real binomial coefficient. If $k>n\ge0$, then denote $k=n+m$ for some $m\in \mathbb{N}$, and we have
     \begin{equation*}
          \mathrm{Pip}(X=n+m)=(-1)^{m-1}\frac{n!(m-1)!}{(n+m)!}p^{n+m}(1-p)^{-m}\bar{0}\softplus0.    
     \end{equation*}
If $k<0<n$, then denote $k=-m$ for some $m\in \mathbb{N}$, and we have
     \begin{equation*}
          \mathrm{Pip}(X=-m)=(-1)^{m-1}\frac{n!(m-1)!}{(n+m)!}p^{-m}(1-p)^{n+m}\bar{0}\softplus0.    
     \end{equation*}
We have some symmetry between the last two case: The binomial coefficient is identical as expected, but the exponent of $p$ and $(1-p)$ a are swapped.

In Table \ref{tab: impossible/possible probs binomial distribution Bin(n=6,p=1/2)}, we show some numerical examples of impossible/possible probabilities for $n=6$ and $p=\frac{1}{2}$ the the binomial distribution. Since $p=\frac{1}{2}$, every pair of integers ($k>6$,$k<0$) that sum to $n=6$ share the same impossible/possible probability value. In addition, as $k$ diverges from the set \{0,1,...,6\} to infinity (and negative infinity), the absolute value of the impossible probability part  monotonically converges to zero i.e., it become more impossible. 

\begin{table}[htb]
    \centering
    \begin{tabular}{c|c}
        $k$ & $\mathrm{Pip}(X=k)$ \\
        \hline
         $4$ & $0\bar{0}\softplus\frac{15}{64}$ \\
         $7$ & $\frac{1}{448}\bar{0}\softplus0$ \\
         $8$ & $-\frac{1}{35854}\bar{0}\softplus0$ \\
         $-1$ & $\frac{1}{448}\bar{0}\softplus0$ \\
         $-2$ & $-\frac{1}{35854}\bar{0}\softplus0$ \\
    \end{tabular}
    \caption{Numerical examples of impossible/possible probabilities for binomial distribution $X\sim\text{Bin}\left(n=6,p=\frac{1}{2}\right)$}
    \label{tab: impossible/possible probs binomial distribution Bin(n=6,p=1/2)}
\end{table}

\begin{Example}[Poisson Distribution]
Let $X$ be a discrete random variable with Poisson distribution with parameter $\lambda$, $X\sim\text{Pois}(\lambda)$. The probability of obtaining $k$ events is given by
\begin{equation}
  \mathrm{Pr}(X=k)=\frac{\lambda^k}{k!}e^{-\lambda}.  
\end{equation}
The extension with impossible probability is accomplished by considering $k<0$. In this case, we have the following impossible/possible probability is (replace $k=-m$, $m\in\mathbb{N}$)      \begin{equation*}
        \mathrm{Pip}(X=-m)=(-1)^{m-1}(m-1)!\lambda^{-m}e^{-\lambda}\bar{0}\softplus0.    
     \end{equation*}    
\end{Example}
We have thus shown that soft numbers can be used to compute impossible events in the Poisson distribution.



\section{Discussion} \label{sec: Discussion}
In the paper, we presented a new approach to classical probability theory by distinguishing between two events whose probability is zero. When Kolmogorov published probability theory, the scientific world was in turmoil over the probabilistic interpretation of quantum theory. In fact, Kolmogorov developed the theory in accordance with the model developed by Pascal and Fermat in the 17th century. Our approach extends classical probability theory by distinguishing between the different multiples of the number zero. We defined a new type of numbers called soft numbers.

Algebraically, soft numbers are similar to the dual numbers developed by Clifford \cite{Clifford1871}. The difference lies in the distinction that the infinitesimal space arises directly from the creation of a space of infinitely many zeros that are distinct from one another, and in the geometric interpretation of the zero axis. This discovery allows us to create a new coordinate system that is non-Cartesian and produces a simple geometric model of a Möbius strip. In doing so, we link the two parts of Hilbert's sixth problem. Moreover, a set, soft numbers represent better the two possible matrix representations of the dual numbers than the traditional dual numbers, because the dual part $\varepsilon$ usually represent only one option between two matrix representations in $M_2(\mathbb{R})$, while soft numbers includes two bridge numbers with the same nilpotent part, the soft zero part of the soft number. In addition, we established and extension of the soft numbers by polynomial rings, quotient rings of quadratic algebra, and more specific of the complex numbers, the split-complex numbers and the dual numbers. In some sense, the dual numbers are part of the soft numbers.

A significant recent contribution to Hilbert's sixth problem was made by Deng et al. \cite{Deng2025}, who rigorously derived the fundamental equations of fluid mechanics starting from Newtonian hard-sphere particle systems, using Boltzmann's kinetic theory as an intermediate step. At its core, their result addresses a deep passage from local microscopic information to global macroscopic laws. The geometric framework developed in the present paper, based on Soft Numbers and the Möbius-strip structure, may be viewed as a complementary perspective on this local-to-global passage. The topological properties of the Möbius-strip-locally two-sided yet globally one-sided — provide a natural geometric model for the kind of emergent macroscopic behavior that Hilbert's sixth problem seeks to axiomatize. Developing a rigorous connection between these two approaches remains an open and promising direction for future research.

Using soft numbers, we are able to distinguish between two events that are impossible. In other words, we have discovered that there are different degrees of impossibility. In addition, we find that it is necessary to introduce an additional axiom to the five axioms of Kolmogorov in probability theory. The sixth axiom states that there exist certain events whose probability is an infinitesimal quantity.

\section{Conclusion}\label{sec: Conclusion}
Today there is a need to understand the laws of physics through the recognition that the observer is an organic part of the world of phenomena. A Möbius-strip, which has only one side, allows us to see how a person observes the world and also observes themselves as an observer of the world. In Deng et al. \cite{Deng2025}, the connection between local macroscopic phenomena and global macroscopic phenomena was investigated. A Möbius-strip expresses in its very essence a paradoxical relationship between locality and globality: locally it has two sides, but globally it has only one. We propose to enrich probability theory by introducing a structural distinction between events of zero probability, so that zero is no longer a single point but a space with internal structure. This structure makes it possible to define different degrees of impossibility, analogous to distinctions between the possible and the impossible in other areas of mathematics, such as constructions with a straightedge and compass, the resolvability of algebraic equations, or problems of computability, thereby revealing information that is lost within the classical framework of probability theory.

\bigskip
\noindent
\textbf{Soft Logic and Soft Numbers may therefore constitute not only an extension of probability theory, but also a new geometric language for the foundations of physics.}



\end{document}